\providecommand{\U}[1]{\protect\rule{.1in}{.1in}}
\newtheorem{theorem}{Theorem}
\newtheorem{corollary}[theorem]{Corollary}
\let\originalleft\left
\let\originalright\right
\renewcommand{\left}{\mathopen{}\mathclose\bgroup\originalleft}
\renewcommand{\right}{\aftergroup\egroup\originalright}
\begin{document}

\title{Entanglement-assisted quantum turbo codes}
\author{Mark M. Wilde, Min-Hsiu Hsieh, and Zunaira Babar\thanks{Mark
M.~Wilde was 
with the School of Computer Science, McGill University, Montreal, Quebec,
Canada H3A 2A7 and is now with the Department of Physics and Astronomy and
the Center for Computation and Technology, Louisiana State University, Baton Rouge,
Louisiana, USA. Min-Hsiu Hsieh was with the ERATO-SORST Quantum Computation
and Information Project, Japan Science and Technology Agency 5-28-3, Hongo,
Bunkyo-ku, Tokyo, Japan and the Statistical
Laboratory, University of Cambridge, Wilberforce Road, Cambridge CB3 0WB, UK.
He is now with the Centre for Quantum Computation and Intelligent Systems,
Faculty of Engineering and Information Technology,
University of Technology, Sydney, P.O.~Box 123, Broadway NSW 2007, Australia.
Zunaira Babar is with the School of Electronics
and Computer Science, University of Southampton, SO17 1BJ, United Kingdom.
(E-mail: mwilde@gmail.com, minhsiuh@gmail.com, and zunaira.babar@gmail.com).

This paper was presented in part at the 2011 International Symposium on Information Theory in
Saint-Petersburg, Russia.}}
\date{\today}
\maketitle

\begin{abstract}
An unexpected breakdown in the existing theory of quantum serial
turbo coding is that a quantum convolutional encoder cannot simultaneously be
recursive and non-catastrophic. These properties are essential for quantum
turbo code families to have a minimum distance growing with blocklength and for their iterative
decoding algorithm to converge, respectively. Here, we show that the
entanglement-assisted paradigm simplifies the theory of quantum turbo codes, in the sense
that an entanglement-assisted quantum (EAQ)\ convolutional encoder can possess
both of the aforementioned desirable properties. We
give several examples of EAQ\ convolutional encoders that are both recursive
and non-catastrophic and detail their relevant parameters.
We then modify the quantum turbo decoding algorithm
of Poulin {\it et al.}, in order to have the constituent decoders pass along
only ``extrinsic information''
to each other rather than {\it a posteriori} probabilities
as in the decoder of
Poulin {\it et al.}, and this leads to a significant improvement in the
performance 
of unassisted quantum turbo codes.
Other simulation results
indicate that entanglement-assisted turbo codes can operate reliably in a
noise regime 4.73 dB beyond that of standard quantum turbo codes, when used on
a memoryless depolarizing channel. Furthermore,
several of our quantum turbo codes are within 1~dB or less of their hashing limits,
so that the performance of quantum turbo codes is now on par with
that of classical turbo codes.
Finally, we prove that entanglement
is \textit{the} resource that enables a convolutional encoder to be both
non-catastrophic and recursive because an encoder
acting on only information qubits, classical
bits, gauge qubits, and ancilla qubits cannot simultaneously satisfy them.
\end{abstract}

\begin{IEEEkeywords}quantum communication, entanglement-assisted quantum turbo code,
entanglement-assisted quantum error correction, recursive, non-catastrophic,
entanglement-assisted quantum convolutional code\end{IEEEkeywords}

\section{Introduction}

Classical turbo codes represent one of the great successes of the modern
coding era~\cite{BGT93,BM96,BG96,BDMP98}. These near Shannon-limit codes have
efficient encodings, they offer astounding performance on memoryless channels,
and their iterative decoding algorithm quickly converges to an accurate error
estimate. They are ``probabilistic codes,'' meaning that they possess sufficient
structure to ensure efficient encoding and decoding, yet they have enough
randomness to allow for analysis of their performance with the probabilistic
method~\cite{BM96,BDMP98,KU98}.

The theory of quantum turbo codes is much younger than its classical
counterpart~\cite{PTO09}, and we still stand to learn more
regarding these codes' performance and structure. Poulin \textit{et al}.~set
this theory on a firm foundation \cite{PTO09} in an attempt to construct
explicit quantum codes that come close to achieving the quantum capacity of a
quantum
channel~\cite{PhysRevA.55.1613,capacity2002shor,ieee2005dev,qcap2008first}.
The structure of a quantum serial turbo code is similar to its classical
counterpart---one quantum convolutional encoder
\cite{PhysRevLett.91.177902,ieee2007forney}\ followed by a quantum interleaver
and another quantum convolutional encoder. The encoder \textquotedblleft
closer to the channel\textquotedblright\ is the inner encoder, and the one
\textquotedblleft farther from the channel\textquotedblright\ is the outer
encoder. One of the insights of Poulin \textit{et al}.~in Ref.~\cite{PTO09}
was to \textquotedblleft quantize\textquotedblright\ the classical notion of a
state diagram~\cite{V71,VVS97}---this diagram helps in analyzing important
properties of the constituent quantum convolutional encoders that directly
affect the performance of the resulting quantum turbo code.

Despite Poulin \textit{et al}.'s~success in providing a solid theoretical
construction, they discovered an unexpected
breakdown in the theory of quantum turbo codes. They found that
quantum convolutional encoders cannot be simultaneously non-catastrophic and
recursive, two desirable properties that can hold simultaneously for classical
convolutional encoders and are one reason underpinning the high performance of
classical turbo codes~\cite{BM96,BDMP98}. These two respective properties
ensure that an iterative decoder performs well in estimating errors and that
the turbo code family has a minimum distance growing almost linearly with
the length of the code~\cite{KU98,P09,OPT08}. Quantum convolutional encoders
cannot have these properties simultaneously, essentially because stabilizer
operators must satisfy stringent commutativity constraints in order to form a
valid quantum code (see Theorem~1 of Ref.~\cite{PTO09} or the simplified proof
in Ref.~\cite{HW12}). Thus, the existing
quantum turbo codes with non-catastrophic constituent quantum convolutional
encoders do not have a growing minimum distance, but Poulin \textit{et
al}.~conducted numerical simulations and showed that performance of their
quantum turbo codes appears to be good in practice.

The breakdown in the quantum turbo coding theory has led researchers to ponder
if some modification of the quantum turbo code construction could have
both a growing minimum distance and the iterative decoding algorithm
converging~\cite{T09}. One possibility is simply to change the paradigm for
quantum error correction, by allowing the sender and the receiver access to
shared entanglement before communication begins. This paradigm is known as the
\textquotedblleft entanglement-assisted\textquotedblright\ setting, and it
simplifies both the theory of quantum error correction \cite{BDH06,DBH09} and
the theory of quantum channels \cite{PhysRevLett.83.3081,ieee2002bennett}. In
entanglement-assisted quantum (EAQ) error correction, it is not necessary for
a set of stabilizer operators to satisfy the stringent commutativity
constraints that a standard quantum code should satisfy, allowing us to
produce EAQ codes from arbitrary classical codes.\footnote{This holds for EAQ
convolutional codes in addition to EAQ block codes~\cite{WB07,WB08}.} In
the theory of quantum channels, the classical and quantum capacities of a quantum
channel assisted by entanglement are the only known capacities for which we
can claim a complete understanding in the general case---both have formulas
involving an optimization of the quantum mutual information with respect to
all pure, bipartite entangled inputs to \textit{a single use} of the channel
(formally analogous to Shannon's formula for the classical capacity of a
classical channel~\cite{bell1948shannon}). The simplification that
entanglement offers to the theory of quantum information has led to the
following musing of Hayden \textit{et al}.~\cite{DHL10}:

\begin{quote}
\textquotedblleft To what extent does the addition of free entanglement make
quantum information theory similar to classical information
theory?\textquotedblright
\end{quote}

A naive attempt at constructing entanglement-assisted quantum turbo codes
(EAQTCs)
would be to produce them from classical turbo codes simply by following the recipe
given in Refs.~\cite{BDH06,DBH09}. That is, one could use the parity check matrix
of a classical turbo code to build an EAQTC according to the well-known CSS
construction \cite{PhysRevA.54.1098,PhysRevLett.77.793} and its
entanglement-assisted generalization \cite{BDH06,DBH09}. Though, this approach suffers from
several drawbacks, which are not present when one constructs EAQTCs from first principles:
\begin{itemize}

\item Following the recipe of Refs.~\cite{BDH06,DBH09} is
really just a ``blind import,'' and as such, it excludes 
us from understanding the theory of EAQTCs
at a deeper level. As discussed before, there are important theoretical issues
with the theory of quantum turbo coding \cite{PTO09}, and understanding the state diagram
of an EAQTC could in turn be helpful for understanding issues having to do with
recursiveness and non-catastrophicity.

\item The ``blind import'' approach does not provide any insight for achieving
an encoding efficiency beyond the $O(n^2)$ efficiency given by the encoding
algorithms from Refs.~\cite{thesis97gottesman,grassl2006itw}. In comparison,
the ``first principles'' approach given here leads to an encoder with a complexity linear in
the block length $n$. Also, a first principles approach gives control over the number of memory
qubits used by the constituent quantum convolutional encoders \cite{HHW12}, and this is
an important parameter contributing to the complexity of the encoder and the decoding algorithm.\footnote{In
our statement that a first principles approach leads to a linear encoding complexity, note that we are fixing
the number of the memory qubits to be constant with respect to the blocklength.
Of course, in any practical setting, minimizing the
number of memory qubits is essential because the complexity of the encoder and decoder
grows exponentially with the number of memory qubits.}

\item The ``blind import'' approach does not provide any insight for constructing
a decoding algorithm to take advantage of important effects such
as degeneracy \cite{SS96,PhysRevA.57.830}, nor is it clear that
the decoding algorithm will be as efficient as one could have from a first principles approach. 
Indeed, the first principles approach given here leads to a decoding algorithm (based on that from
Ref.~\cite{PTO09}) with a complexity linear in the block length.

\item The ``blind import'' approach does not give any clear control over the
entanglement consumption rate of the resulting EAQTC, other than that which is given by the formulas
in Refs.~\cite{arxiv2007brun,arx2008wildeOEA}. Given that shared entanglement is a precious resource,
it would be desirable to minimize the consumption of it. The first principles approach outlined here
gives the quantum code designer precise control over the entanglement consumption rate of the resulting
EAQTC.

\end{itemize}
Clearly, given all of the above, it is a worthwhile endeavor to construct a theory
of EAQTCs from first principles.

\section{Summary of Results}

In this paper, we show that entanglement assistance simplifies the theory
of quantum turbo codes in several important ways, we significantly enhance the performance
of the quantum turbo decoding algorithm from Ref.~\cite{PTO09},
and we also examine the effect on the performance
of quantum turbo codes by adding entanglement assistance. Specifically,
\begin{enumerate}

\item We develop a ``first principles'' approach to entanglement-assisted quantum turbo codes.
Although this theory
is admittedly a straightforward extension of the theory of entanglement-assisted codes
\cite{BDH06,DBH09} and quantum turbo codes \cite{PTO09}, it is necessary for us to develop it in order
to understand how notions such as the state diagram, non-catastrophicity, and recursiveness change
in the entanglement-assisted setting.

\item We show how to circumvent the ``no-go'' theorem of Ref.~\cite{PTO09}. In particular,
we find many
examples of EAQ convolutional encoders that can simultaneously be recursive
and non-catastrophic. 

\item We enhance the performance of the quantum turbo decoding algorithm
from Ref.~\cite{PTO09}, by having the constituent decoders pass along
``extrinsic information'' to each other rather than {\it a posteriori} probabilities
as in the decoding algorithm of Ref.~\cite{PTO09}. This modification is consistent
with how classical turbo decoding algorithms operate and is one of the reasons
why they perform near the Shannon limit. In particular,
several of our quantum turbo codes are within 1~dB or less of their hashing limits,
so that the performance of quantum turbo codes is now on par with
that of classical turbo codes.

\item Our simulations explore the effects of adding entanglement assistance
in various ways to unassisted quantum turbo codes. The results
of these simulations indicate that adding entanglement assistance
increases their performance on a memoryless
depolarizing channel (as one would expect), but they also suggest how to make judicious
use of entanglement consumption in a quantum turbo code.
We also consider the more practical
situation in which the entanglement is noisy and find that particular
entanglement-assisted quantum turbo codes have a certain amount of robustness
to this noise.

\item We broaden the scope of the ``no-go'' theorem of Ref.~\cite{PTO09}
to quantum convolutional encoders acting on logical qubits, classical bits, ancilla
qubits, and gauge (mixed-state) qubits (i.e., we prove that all such encoders cannot be
both recursive and non-catastrophic).
This result implies that entanglement is \textit{the} resource enabling a
quantum convolutional encoder to be both recursive and non-catastrophic.

\item We finally explore how recursiveness, non-recursiveness, catastrophicity,
or non-catastrophicity
are preserved under various resource substitutions of a quantum convolutional
encoder, such as converting ancilla qubits to classical bits,
converting ancilla qubits to ebits, etc. This exploration reveals the relationships
underpinning different kinds of quantum convolutional encoders.

\end{enumerate}

The ability of an entanglement-assisted quantum turbo code to be simultaneously recursive
and non-catastrophic has important implications. A \textquotedblleft quantized\textquotedblright\ version
of the result in Ref.~\cite{KU98} implies that the quantum serial turbo
code family formed by employing such an encoder along with another non-catastrophic encoder has
a minimum distance growing with the length of the code~\cite{OPT08}, and non-catastrophicity implies
that it has good iterative decoding performance. This result for EAQ
convolutional encoders holds partly because all four Pauli operators acting on
half of a Bell state are distinguishable when performing a measurement on both
qubits in the Bell state (much like the super-dense coding
effect~\cite{PhysRevLett.69.2881}). The ability of EAQ convolutional encoders
to be simultaneously non-catastrophic and recursive is another way in which
shared entanglement aids in a straightforward quantization of a classical
result---this assumption thus simplifies and enhances the existing theory of
quantum turbo coding.

Regarding our simulations, we found two quantum convolutional
encoders that are comparable to the first and third encoders of Poulin
\textit{et al}.~\cite{PTO09}, in the sense that they have the same number of
memory qubits, information qubits, ancilla qubits, and a comparable distance
spectrum. These encoders are non-catastrophic, and they become recursive after
replacing all of the ancilla qubits with ebits. Additionally, the encoders
with full entanglement assistance have a distance spectrum much improved over
the unassisted encoders, essentially because entanglement increases the
ability of a code to correct errors~\cite{LB10}.

We constructed a quantum
serial turbo code with these encoders and conducted four types of
simulations:\ the first with the unassisted encoders, a second with full
entanglement assistance, a third with the inner encoder assisted, and a fourth
with the outer encoder assisted. Due to our enhancement of the
quantum turbo decoding algorithm from Ref.~\cite{PTO09},
the unassisted quantum turbo codes perform
significantly better than those in Ref.~\cite{PTO09}.
The encoders with full entanglement assistance have an improvement in
performance over the unassisted ones, in the sense that they can operate
reliably in a noise regime several dB beyond the unassisted turbo codes.
This is due to the improvement in the distance spectrum and is also due to
the encoder becoming recursive. Also, these codes come close to achieving the
entanglement-assisted hashing
bound~\cite{PhysRevLett.83.3081,PhysRevA.66.052313}, which is the ultimate
limit on their performance.
The quantum turbo codes with inner encoder
entanglement assistance have performance a few dB below the fully-assisted
code, but one advantage of them is that other simulations indicate that they
are more tolerant to noise on the ebits.
%The quantum turbo codes with outer
%encoder entanglement assistance have performance not far beyond that of the
%unassisted turbo codes. The hope for this last construction was that assisted
%outer encoders with an improved distance spectrum combined with unassisted
%inner encoders satisfying the weaker condition of ``quasi-recursiveness'' would be
%sufficient to have a marked improvement over the unassisted turbo codes, but
%our simulation results have shown that this intuition does not hold.

We organize this paper as follows. The first section establishes notations and
definitions for EAQ\ codes similar to those in Ref.~\cite{PTO09}, and it also
shows a way in which an EAQ code with only ebits is remarkably similar to a
classical code. Section~\ref{sec:EAQ-conv-defs}\ defines the state diagram of
an EAQ convolutional encoder---it reveals if an encoder is non-catastrophic
and recursive, and we review how to check for these properties.
Section~\ref{sec:examples} gives several examples of non-catastrophic,
recursive EAQ convolutional encoders and details their distance spectra. We
discuss the construction of an EAQ serial turbo code in
Section~\ref{sec:EAQ-turbo}\ and give several combinations of serial
concatenations that have good minimum-distance scaling.
In Section~\ref{sec:turbo-decode-mod}, we detail how to modify the quantum turbo decoding
algorithm from Ref.~\cite{PTO09} such that
the constituent decoders pass along
only extrinsic information, and we discuss why this leads to an improvement
in performance.
Section~\ref{sec:sim-results}\ contains our simulation results with
accompanying interpretations of them. In Section~\ref{sec:corollary}, we show
that entanglement is in fact \textit{the} resource that enables a
convolutional encoder to be both recursive and non-catastrophic---a corollary
of Theorem~1 in Ref.~\cite{PTO09}\ states that other resources such as
classical bits, gauge qubits, and ancilla qubits do not help. Section~\ref{sec:CE-EAQ} then
discusses encoders that act on information qubits, ancilla qubits, ebits, and classical bits,
and it gives an example of an encoders that can be recursive and non-catastrophic.
Section~\ref{sec:preserve} states some general observations regarding recursiveness and
non-catastrophicity for different types of encoders.
The conclusion
summarizes our contribution and states many open questions.

\section{EAQ Codes}

We first review some important ideas from the theory of EAQ codes in order to
prepare us for defining convolutional versions of them along with their
corresponding state diagrams. The development is similar to that in
Section~III of Ref.~\cite{PTO09}.

The encoder $V$\ of an EAQ code produces an encoded state by acting on a set
of $k$ information qubits in a state $\left\vert \psi\right\rangle $, $a$
ancilla qubits, and $c$ ebits:%
\[
\left\vert \overline{\psi}\right\rangle ^{AB}\equiv V^{A}(\left\vert
\psi\right\rangle ^{A}\otimes\left\vert 0_{a}\right\rangle ^{A}\otimes
\left\vert \Phi_{c}^{+}\right\rangle ^{AB}),
\]
where%
\begin{align*}
\left\vert 0_{a}\right\rangle ^{A}  &  \equiv\left\vert 0\right\rangle
^{\otimes a},\\
\left\vert \Phi_{c}^{+}\right\rangle ^{AB}  &  \equiv\left(  \left\vert
\Phi^{+}\right\rangle ^{AB}\right)  ^{\otimes c},\\
\left\vert \Phi^{+}\right\rangle ^{AB}  &  \equiv\frac{1}{\sqrt{2}}\left(
\left\vert 00\right\rangle ^{AB}+\left\vert 11\right\rangle ^{AB}\right)  ,\\
n  &  =k+a+c,
\end{align*}
and the sender Alice possesses $c$ halves of the entangled pairs while the
receiver Bob possesses the other $c$ halves (see Figure~\ref{fig:encoding}).
In what follows, we abuse notation by having $V$ refer to the
\textquotedblleft Clifford group\textquotedblright\ unitary operator that acts
as above, but having it also refer to a binary matrix that acts on binary
vectors---these binary vectors represent different Pauli operators that are
part of the specification of an EAQ code (e.g., see Ref.~\cite{DBH09}%
).\footnote{The representation of the encoder as a binary matrix leads to a
loss of global phase information. Though, this global phase information is not
important because measurement of the syndrome destroys it, and it is not
necessary for faithful recovery of the encoded state.}%

%TCIMACRO{\FRAME{ftbpFU}{2.0358in}{2.0159in}{0pt}{\Qcb{The encoder of an EAQ
%code. Alice acts on her logical qubits $L\,$, local ancillas $S$, and her
%halves of the ebits $E$ with an encoding unitary $V$. This encoding produces
%physical qubits $P$ that she then inputs to a noisy quantum channel. The
%entanglement-assisted paradigm assumes that noise does not occur on Bob's half
%of the ebits, but we later study this setting in some simulations because it
%could occur in practice.}}{\Qlb{fig:encoding}}{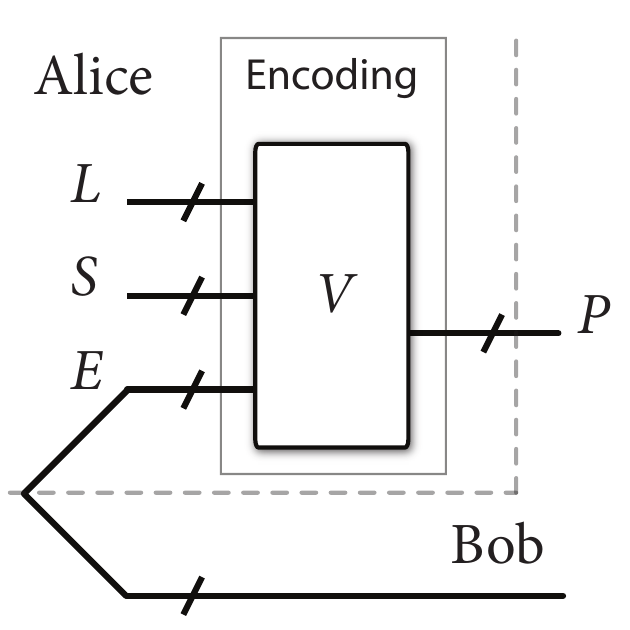}%
%{\special{ language "Scientific Word";  type "GRAPHIC";
%maintain-aspect-ratio TRUE;  display "USEDEF";  valid_file "F";
%width 2.0358in;  height 2.0159in;  depth 0pt;  original-width 3.9271in;
%original-height 3.8934in;  cropleft "0";  croptop "1";  cropright "1";
%cropbottom "0";
%filename 'entanglement-assisted-code.pdf';file-properties "XNPEU";}}}%
%BeginExpansion
\begin{figure}
[ptb]
\begin{center}
\includegraphics[
natheight=3.893400in,
natwidth=3.927100in,
height=2.0159in,
width=2.0358in
]%
{}%
\caption{The encoder of an EAQ code. Alice acts on her logical qubits $L\,$,
local ancillas $S$, and her halves of the ebits $E$ with an encoding unitary
$V$. This encoding produces physical qubits $P$ that she then inputs to a
noisy quantum channel. The entanglement-assisted paradigm assumes that noise
does not occur on Bob's half of the ebits, but we later study this setting in
some simulations because it could occur in practice.}%
\label{fig:encoding}%
\end{center}
\end{figure}
%EndExpansion
Suppose now that Alice transmits her $n$ qubits of the encoded state
$\left\vert \overline{\psi}\right\rangle ^{AB}$ over a noisy Pauli channel.
Then the resulting state is $P^{A}\left\vert \overline{\psi}\right\rangle
^{AB}$ where $P^{A}$ is some $n$-fold tensor product of Pauli operators (in
what follows, we simply say an \textquotedblleft$n$-qubit Pauli
operator\textquotedblright). Suppose that Bob applies the inverse $\left(
V^{A}\right)  ^{\dag}$\ of the encoding to the state $P^{A}\left\vert
\overline{\psi}\right\rangle ^{AB}$. The resulting state has the following
form:%
\begin{align}
(V^{A})^{\dag}P^{A}\left\vert \overline{\psi}\right\rangle ^{AB}  &
=(V^{A})^{\dag}P^{A}V^{A}(\left\vert \psi\right\rangle ^{A}\otimes\left\vert
0_{a}\right\rangle ^{A}\otimes\left\vert \Phi_{c}^{+}\right\rangle
^{AB})\nonumber\\
&  =L^{A}\left\vert \psi\right\rangle ^{A}\otimes S^{A}\left\vert
0_{a}\right\rangle ^{A}\otimes E^{A}\left\vert \Phi_{c}^{+}\right\rangle
^{AB}), \label{eq:decoded-Paulis}%
\end{align}
where $L^{A}$ is some $k$-qubit Pauli operator, $S^{A}=S_{1}^{A}\otimes
\cdots\otimes S_{a}^{A}$ is an $a$-qubit Pauli operator, and $E^{A}=E_{1}%
^{A}\otimes\cdots\otimes E_{c}^{A}$ is some $c$-qubit Pauli operator. Observe
that%
\[
S^{A}\left\vert 0_{a}\right\rangle ^{A}=\left\vert s_{1}\right\rangle
^{A}\otimes\cdots\otimes\left\vert s_{a}\right\rangle ^{A},
\]
where $s_{i}=0$ if $S_{i}\in\left\{  I,Z\right\}  $ and $s_{i}=1$ otherwise.
The fact that $s_{i}$ is invariant under the application of a Pauli $Z$
operator implies that a quantum code can be degenerate (where two different
errors mapping to the same syndrome have the same effect on the encoded
state). Also, observe that%
\[
E^{A}\left\vert \Phi_{c}^{+}\right\rangle ^{AB}=\left\vert \Phi\left(
e_{1,x},e_{1,z}\right)  \right\rangle ^{AB}\otimes\cdots\otimes\left\vert
\Phi\left(  e_{c,x},e_{c,z}\right)  \right\rangle ^{AB},
\]
where $\left\vert \Phi\left(  e_{i,x},e_{i,z}\right)  \right\rangle ^{AB}$
denotes the four distinguishable Bell states and
\begin{align*}
\left(  e_{i,x},e_{i,z}\right)   &  =\left(  0,0\right)  \text{ if }E_{i}%
^{A}=I,\\
\left(  e_{i,x},e_{i,z}\right)   &  =\left(  0,1\right)  \text{ if }E_{i}%
^{A}=Z,\\
\left(  e_{i,x},e_{i,z}\right)   &  =\left(  1,0\right)  \text{ if }E_{i}%
^{A}=X,\\
\left(  e_{i,x},e_{i,z}\right)   &  =\left(  1,1\right)  \text{ if }E_{i}%
^{A}=Y.
\end{align*}
Thus all four Pauli operators are distinguishable in this case by performing a
Bell measurement (similar to the super-dense coding effect
\cite{PhysRevLett.69.2881}). Ebits do not contribute to the degeneracy of a
quantum code because different errors $E^{A}$ lead to distinct measurement results.

Bob can perform $Z$ basis measurements on the ancillas and Bell measurements
on the ebits to determine the syndrome $r\left(  P\right)  $ of the Pauli
error $P$:%
\[
r\left(  P\right)  \equiv\left(  s_{1},\ldots,s_{a},e_{1,x},e_{1,z}%
,\ldots,e_{c,x},e_{c,z}\right)  .
\]
Consider the following relation between the binary representations of $P$,
$V$, $L$, $S$, and $E$ in (\ref{eq:decoded-Paulis}):%
\[
PV^{-1}=\left(  L:S:E\right)  .
\]
The syndrome $r\left(  P\right)  $ only partially determines $S$, but it fully
determines $E$. Let us decompose the binary representation of $S$ as
$S=S^{z}+S^{x}$ and that of $E$ as $E=E^{z}+E^{x}$. When Bob performs his
measurements, he determines $S^{x}$, $E^{z}$, and $E^{x}$. That is, he
determines the following relations between the components $S_{i}^{x}$ of
$S^{x}$ and the components $s_{i}$ of the syndrome $r\left(  P\right)  $:%
\[
S_{i}^{x}=X\text{ if }s_{i}=1,\ \ \ \ S_{i}^{x}=I\text{ otherwise.}%
\]
The syndrome also determines the $i^{\text{th}}$ components $E_{i}^{x}$\ and
$E_{i}^{z}$ of $E$:%
\begin{align*}
E_{i}^{x}  &  =X\text{ if }e_{i,x}=1,\ \ \ \ E_{i}^{x}=I\text{ otherwise,}\\
E_{i}^{z}  &  =Z\text{ if }e_{i,z}=1,\ \ \ \ E_{i}^{z}=I\text{ otherwise.}%
\end{align*}

The phenomenon of degeneracy represents the most radical departure of quantum
coding from classical coding~\cite{SS96,PhysRevA.57.830}. Consider two
different physical errors $P$ and $P^{\prime}$ that differ only by $Z$
operators acting on the ancillas:%
\begin{align*}
P  &  =(L:S^{x}+S^{z}:E^{x}+E^{z})V\\
P^{\prime}  &  =(L:S^{x}+S^{\prime z}:E^{x}+E^{z})V\\
&  =P+(I_{k}:S^{z}+S^{\prime z}:I_{c})V,
\end{align*}
where $I_{k}$ is a length $2k$ zero vector (the binary representation of a
$k$-fold tensor product of identity operators). These different errors lead to
the same error syndrome. In the classical world, this would present a problem
for error correction. But this situation does not cause a problem in the
quantum world for the errors $P$ and $P^{\prime}$---the logical error
affecting the encoded quantum information is the same for both $P$ and
$P^{\prime}$, and Bob can correct either of these errors simply by applying
$L^{-1}$ after decoding.

We now define several sets of operators that are important for determining the
properties and performance of EAQ codes. The set $C\left(  I\right)  $\ of
harmless, undetected errors consists of all operators $P$\ that have a zero
syndrome, yet have no effect on the encoded state:%
\[
C\left(  I\right)  \equiv\left\{  P:P=(I_{k}:S^{z}:I_{c})V,\ \ S^{z}%
\in\left\{  I,Z\right\}  ^{a}\right\}  .
\]
This set of operators is equivalent to the isotropic subgroup, in the language
of Refs.~\cite{BDH06,DBH09}. It is also analogous to the all-zero codeword of
a classical code. The set $C\left(  L\right)  $ of harmful, undetected errors
consists of all operators $P$\ that have a zero syndrome, yet change the
quantum information in the encoded state:%
\[
C\left(  L\right)  \equiv\left\{  P:P=(L:S^{z}:I_{c})V,\ \ S^{z}\in\left\{
I,Z\right\}  ^{a}\right\}  ,
\]
where$\ L\neq I_{k}$. This set of operators corresponds to a single logical
transformation on the encoded state, depending on the choice of $L$, and it is
analogous to a single codeword of a classical code. The set $C\left(
L,S^{x},E^{x},E^{z}\right)  $ corresponds to a particular logical
transformation and syndrome, and is thus a logical coset:%
\begin{multline*}
C\left(  L,S^{x},E^{x},E^{z}\right) \\
\equiv\left\{  P:P=(L:S^{z}+S^{x}:E^{z}+E^{x})V,\ \ S^{z}\in\left\{
I,Z\right\}  ^{a}\right\}  .
\end{multline*}
It is analogous to a single erred codeword of a classical code if $S^{x}$,
$E^{x}$, or $E^{z}$ is non-zero. The operator codewords of an EAQ code belong
to the following set $C$:%
\begin{equation}
C\equiv\bigcup\limits_{L}C\left(  L\right)  , \label{eq:log-ops}%
\end{equation}
where $L$ is an arbitrary $k$-qubit Pauli operator. The set $C$ is equivalent
to the full set of logical operators for the code, and it is analogous to the
set of all codewords of a classical code. These definitions lead to the
definition of the minimum distance of an EAQ code as the minimum weight of an
operator $P$ in $C-C\left(  I\right)  $:%
\[
d\left(  C\right)  \equiv\min\left\{  w\left(  P\right)  :P\in C-C\left(
I\right)  \right\}  .
\]
This definition is similar to the definition of the distance of a classical
code, but it incorporates the coset structure of a quantum code. Thus, we can
determine the performance of the code in terms of distance by tracking its
logical operators, and this intuition is important when we move on to EAQ
convolutional codes. Also, the logical operators play an important role in
decoding because a maximum likelihood decoding algorithm for a quantum code
estimates the most likely logical error given the syndrome and a particular
physical noise model.\footnote{The definitions for the maximum likelihood
decoder of an EAQ code are nearly identical to those for stabilizer codes in
Section~IIIC of Ref.~\cite{PTO09}. Thus, we do not give them here.}

We end this section with a final remark concerning EAQ codes and the musing of
Hayden \textit{et al}.~in Ref.~\cite{DHL10}. Suppose that an EAQ code does not
exploit any ancilla qubits and uses only ebits. Then the features of the code
become remarkably similar to that of a classical code. Degeneracy, a uniquely
quantum feature of a code, does not occur in this case because the syndrome
completely determines the error, in analogy with error correction in the
classical world. Also, the code loses its coset structure, so that $C\left(
I\right)  $ is equal to the identity operator, $C\left(  L\right)  $ is
equivalent to just one logical operator, and the definition of the code's
minimum distance is the same as the classical definition.

\section{EAQ convolutional codes}

\label{sec:EAQ-conv-defs}An EAQ convolutional code is a particular type of EAQ
code that has a convolutional structure. In past work on this
topic~\cite{WB07,WB08,WB09}, we adopted the \textquotedblleft
Grassl-R\"{o}tteler\textquotedblright\ approach to this theory~\cite{GR06b},
by beginning with a mathematical description of the code and determining a
Grassl-R\"{o}tteler pearl-necklace encoder that can encode it. Here, we adopt
the approach of Poulin \textit{et al}.~\cite{PTO09} which in turn heavily
borrows from ideas in classical convolutional coding~\cite{V71}. We begin with
a seed transformation (a \textquotedblleft convolutional
encoder\textquotedblright\ or a \textquotedblleft quantum shift-register
circuit\textquotedblright) and determine its state diagram, which yields
important properties of the encoder. We can always rearrange a
Grassl-R\"{o}tteler pearl-necklace encoder as a convolutional encoder
\cite{W09,HHW10,HH10}, but it is not clear that every convolutional encoder
admits a form as a Grassl-R\"{o}tteler pearl necklace encoder. For this reason
and others, we adopt the Poulin \textit{et al}.~approach in what follows.

An EAQ convolutional encoder is a \textquotedblleft Clifford
group\textquotedblright\ unitary $U$ that acts on $m$\ memory qubits, $k$
information qubits, $a$ ancilla qubits, and $c$ halves of ebits to produce a
set of $m$ memory qubits and$\ n$ physical or channel
qubits,\footnote{Physical or channel qubits in the entanglement-assisted
paradigm are the ones that Alice transmits over the channel.} where $n=k+a+c$.
The transformation that it induces on binary representations of Pauli
operators acting on these registers is as follows:%
\[
\left(  M^{\prime}:P\right)  =(M:L:S:E)U,
\]
where $M^{\prime}$ acts on the $m$\ output memory qubits, $P$ acts on the $n$
output physical qubits, $M$ acts on the $m$ input memory qubits, $L$ acts on
the $k$ information qubits, $S$ acts on the $a$ ancilla qubits, and $E$ acts
on the $c$ halves of ebits.\ Although the quantum states in these registers
can be continuous in nature, the act of syndrome measurement discretizes the
errors acting on them, and the above classical representation is useful for
analysis of the code's properties and the flow of the logical operators
through the encoding circuit (recall that the goal of a decoding algorithm is
to produce good estimates of logical errors). This representation is similar
to the shift-register representation of classical convolutional codes, with
the difference that the representation there corresponds to the actual flow of
bit information through a convolutional circuit, while here it is merely a
useful tool for analyzing the properties of the encoder.

The overall encoding operation for the code is the transformation induced by
repeated application of the above seed transformation to a quantum data stream
broken up into periodic blocks of information qubits, ancilla qubits, and
halves of ebits while feeding the output memory qubits of one transformation
as the input memory qubits of the next (see Figure~6\ of Ref.~\cite{PTO09}%
\ for a visual aid). The advantage of a quantum convolutional encoder is that
the complexity of the overall encoding scales only linearly with the length of
the code for a fixed memory size, while the decoding complexity scales
linearly with the length of the code by employing a local maximum likelihood
decoder combined with a belief propagation algorithm~\cite{PTO09}. The quantum
communication rate of the code is essentially $k/n$ while the entanglement
consumption rate is $c/n$, if the length of the code becomes large compared to
$n$.

\subsection{State Diagram}

The state diagram of an EAQ\ convolutional encoder is the most important tool
for analyzing its properties, and it is the formal quantization of a classical
convolutional encoder's state diagram~\cite{V71,VVS97}. It examines the flow
of the logical operators through the encoder with a finite-state machine
approach, and this representation is important for analyzing both its distance
and its performance under the iterative decoding algorithm of
Ref.~\cite{PTO09}. The \textit{state diagram} is a directed multigraph with
$4^{m}$ vertices that we think of as memory states. We label each memory state
with an $m$-qubit Pauli operator $M$. We connect two vertices with a directed
edge from $M\rightarrow M^{\prime}$, labeled as $\left(  L,P\right)  $, if
there exists a $k$-qubit Pauli operator $L$, an $n$-qubit Pauli operator $P$,
and an $a$-qubit Pauli operator $S^{z}\in\left\{  I,Z\right\}  ^{a}$ such that%
\begin{equation}
\left(  M^{\prime}:P\right)  =(M:L:S^{z}:I_{c})U. \label{eq:state-diagram}%
\end{equation}
We refer to the labels $L$ and $P$ of an edge as the respective logical and
physical label.%
%TCIMACRO{\FRAME{ftbpFU}{1.6336in}{1.0075in}{0pt}{\Qcb{(Color online) Seed
%transformation for an $n=2$, $k=1$, $c=1$, $m=1$ EAQ convolutional code.}%
%}{\Qlb{fig:simple-encoder}}{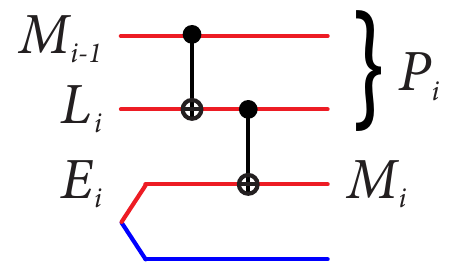}%
%{\special{ language "Scientific Word";  type "GRAPHIC";  display "USEDEF";
%valid_file "F";  width 1.6336in;  height 1.0075in;  depth 0pt;
%original-width 2.8063in;  original-height 1.8265in;  cropleft "0";
%croptop "1";  cropright "1";  cropbottom "0";
%filename 'simple-circuit.pdf';file-properties "XNPEU";}}}%
%BeginExpansion
\begin{figure}
[ptb]
\begin{center}
\includegraphics[
natheight=1.826500in,
natwidth=2.806300in,
height=1.0075in,
width=1.6336in
]%
{}%
\caption{(Color online) Seed transformation for an $n=2$, $k=1$, $c=1$, $m=1$
EAQ convolutional code.}%
\label{fig:simple-encoder}%
\end{center}
\end{figure}
%EndExpansion

As an example, consider the transformation depicted in
Figure~\ref{fig:simple-encoder}. It acts on one memory qubit, one information
qubit, and one half of an ebit to produce two channel qubits and one memory
qubit. Figure~\ref{fig:state-diagram}\ illustrates the state diagram
corresponding to this transformation. There are four memory states because
there is only one memory qubit, and there are 16 edges because there are four
memory states and four logical operators for one information qubit and one
ebit.%
%TCIMACRO{\FRAME{ftbpFU}{2.5278in}{2.9836in}{0pt}{\Qcb{The state diagram
%corresponding to the seed transformation in Figure~\ref{fig:simple-encoder}.
%The state diagram allows us to check that the encoder is non-catastrophic and
%non-recursive.}}{\Qlb{fig:state-diagram}}{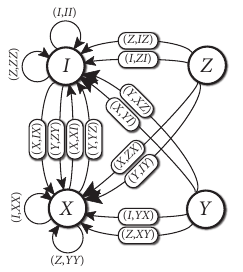}%
%{\special{ language "Scientific Word";  type "GRAPHIC";
%maintain-aspect-ratio TRUE;  display "USEDEF";  valid_file "F";
%width 2.5278in;  height 2.9836in;  depth 0pt;  original-width 1.4832in;
%original-height 1.7538in;  cropleft "0";  croptop "1";  cropright "1";
%cropbottom "0";  filename 'non-catastrophic.pdf';file-properties "XNPEU";}}}%
%BeginExpansion
\begin{figure}
[ptb]
\begin{center}
\includegraphics[
natheight=1.753800in,
natwidth=1.483200in,
height=2.9836in,
width=2.5278in
]%
{}%
\caption{The state diagram corresponding to the seed transformation in
Figure~\ref{fig:simple-encoder}. The state diagram allows us to check that the
encoder is non-catastrophic and non-recursive.}%
\label{fig:state-diagram}%
\end{center}
\end{figure}
%EndExpansion

\subsection{Non-catastrophicity}

We now recall the definition of non-catastrophicity in Ref.~\cite{PTO09},
which is the formal quantization of the definition in Section~IX of
Ref.~\cite{V71}.\ The definition from Ref.~\cite{PTO09} is the same for an
EAQ\ convolutional encoder because it depends on the iterative decoding
algorithm used to decode the code, and we can exploit a slight variation of the iterative
decoding algorithm in Ref.~\cite{PTO09} to decode EAQ\ convolutional codes.
A \textit{path} through the state diagram is a sequence $M_{1}$, \ldots,
$M_{t}$ of vertices such that $M_{i}\rightarrow M_{i+1}$ is an edge belonging
to it. Each logical operator of $C$ in (\ref{eq:log-ops}) corresponds to a
path in the state diagram, with the sequence of vertices in the path being the
states of memory traversed while encoding the logical operator. The physical
and logical weights of a logical operator are equal to the sums of the
corresponding weights of the edges traversed in a path that encodes the
logical operator. A \textit{zero physical-weight cycle} is a cycle in the
state diagram such that all edges in the cycle have zero physical weight.
Finally, an EAQ\ encoder acting on $m$ memory qubits, $k$ information qubits,
$a$ ancilla qubits, and $c$ ebits is \textit{non-catastrophic} if every zero
physical-weight cycle in its state diagram has zero logical weight.

Why is this definition an appropriate definition of non-catastrophicity?
First, suppose that we modify the circuit
in Figure~\ref{fig:simple-encoder} so that the ebit is replaced
by an ancilla qubit (it thus becomes the same as Figure~8 of
Ref.~\cite{PTO09}). Such a replacement leads to a doubling of the number of
edges in the state diagram in our Figure~\ref{fig:state-diagram}\ because the
state diagram should include all of the transitions where a $Z$ operator acts
on the ancilla qubits (these all lead to other logical operators in their
case). The new state diagram (see Figure~9 of Ref.~\cite{PTO09}) then includes a
self-loop at memory state $Z$ with zero physical weight and non-zero logical
weight, and it is thus catastrophic according to the definition. The problem
with such a loop is that it can \textquotedblleft throw off\textquotedblright%
\ the iterative decoding algorithm. If a $Y$ error were to act on the second physical qubit in
one frame of the code (while the identity acts on the rest), then
\textquotedblleft pushing\textquotedblright\ this error through the inverse of
the encoder applies an $X$ operator to the ancilla qubit and produces one
syndrome bit for that frame of the code, but it propagates $Z$ errors onto
every logical qubit in the stream while applying $Z$ errors to every memory
qubit. All of these other errors go undetected by an iterative decoder because
the error propagation does not trigger any additional syndrome bits besides
the initial one that the $Y$ error triggered.

Observe that the state diagram in Figure~\ref{fig:state-diagram}\ for the
EAQ\ convolutional encoder does not feature a zero physical-weight cycle with
non-zero logical weight. Thus, the encoder is non-catastrophic, illustrating
another departure from the classical theory of turbo codes. \textit{Non-catastrophicity in the
quantum world is not only a property of the encoder, but it also depends on
the resources available for encoding.} If we analyze the above scenario that
leads to catastrophic error propagation in the unassisted encoder,
we see that it does not lead to such propagation for the entanglement-assisted
encoder in Figure~\ref{fig:simple-encoder}. Indeed, suppose again that a $Y$
error acts on the second physical qubit in a particular frame of the code
(with the identity acting on all other qubits). Pushing this $Y$ error through
the inverse of the encoder leads to an $X$ operator acting on the ebit and a
$Z$ operator acting on the memory. The $X$ operator is detectable by a Bell
measurement, and the $Z$ operator acting on the memory propagates to a $Z$
operator acting on an ebit in the next frame and then to all ebits and
information qubits in successive frames. These $Z$ operators acting on the
ebits are all detectable by a Bell measurement (whereas they are not detecable
when acting on an ancilla), so that the iterative decoding algorithm can still
correct for errors in the propagation because all these errors trigger syndromes.

\subsection{Recursiveness}

Recursiveness is another desirable property for an EAQ\ convolutional encoder
when it is employed as the inner encoder of a quantum serial turbo
code.\footnote{Recall that the inner encoder is the one closer to the channel,
and the outer encoder is the one farther from the channel.} This property
ensures that a quantum serial turbo code on average has a distance growing
near-linearly with the length of the code, and a proof of this statement
given in Ref.~\cite{OPT08}\ is essentially a direct translation of the classical
proof~\cite{KU98}. In short, an EAQ\ convolutional encoder is
\textit{quasi-recursive} if it transforms every weight-one Pauli operator
$X_{i}$, $Y_{i}$, or $Z_{i}$\ to an infinite-weight Pauli
operator~\cite{PTO09}. It is \textit{recursive} if, in addition to being
quasi-recursive, every element in the logical cosets $C\left(  X_{i}\right)
$, $C\left(  Y_{i}\right)  $, and $C\left(  Z_{i}\right)  $ has infinite
weight. This stringent requirement is necessary due to the coset structure of
EAQ codes.\footnote{This definition thus implies that a recursive encoder is
infinite-depth (it transforms a finite-weight Pauli operator to an
infinite-weight one) \cite{GR06b,WB07}, but the other implication does not
necessarily have to hold.} One might think that quasi-recursiveness is a
sufficient requirement for good performance, but our simulation results in
Section~\ref{sec:sim-results} indicate that recursiveness is indeed necessary
because a turbo code has a significant increase in performance if its inner
encoder is recursive.

It seems like it would be demanding to determine if this condition holds for
every possible input, but we can exploit the state diagram to check it. The
algorithm for checking recursiveness is as follows. First, we define an
\textit{admissable path} to be a path in which its first edge is not part of a
zero physical-weight cycle~\cite{PTO09}. Now, consider any vertex belonging to
a zero-physical weight loop and any admissable path beginning at this vertex
with logical weight one. The encoder is \textit{recursive} if all such paths
do not contain a zero physical-weight loop~\cite{PTO09}. The idea is that a
weight-one logical operator is not sufficient to drive a recursive encoder to a
memory state that is part of a zero-physical weight loop---the minimum weight
of a logical operator that does so is two.

The example encoder in Figure~\ref{fig:simple-encoder} is not recursive. The
only vertex in the state diagram in Figure~\ref{fig:state-diagram} belonging
to a zero physical-weight cycle is the vertex labeled $I$. If the logical
input to the encoder is a $Z$ operator followed by an infinite sequence of
identity operators, the circuit outputs $ZZ$ as the physical output, returns to
the memory state $I$, and then outputs the identity for the rest of time. Thus,
the response to this weight-one input is finite, and the circuit is
non-recursive. Although this example is non-recursive,
Section~\ref{sec:examples} details many examples of EAQ\ convolutional
encoders that are both recursive and non-catastrophic.

\subsection{Distance Spectrum}

We end this section by reviewing the performance measures from
Ref.~\cite{PTO09}, which are also quantizations of the classical
measures~\cite{V71}. The \textit{distance spectrum} $F\left(  w\right)  $\ of
an EAQ\ convolutional encoder is the number of admissable paths beginning and
ending in memory states that are part of a zero physical-weight cycle, where
the physical weight of each admissable path is $w$ and the logical weight is
greater than zero. This performance measure is most similar to the weight
enumerator polynomial of a quantum block code~\cite{PhysRevLett.78.1600},
which helps give an upper bound on the error probability of a non-degenerate
quantum code on a depolarizing channel under maximum likelihood
decoding~\cite{PhysRevA.74.052333}. The distance spectrum incorporates the
translational invariance of a quantum convolutional code and gives an
indication of its performance on a memoryless depolarizing channel.
Appendix~\ref{sec:distance-spectrum} details
a simple method to compute the distance spectrum by using the
state diagram and ideas rooted in Refs.~\cite{V71,M98,book1999conv,McE02}. The
\textit{free distance} of an EAQ\ convolutional encoder is the smallest weight
$w$\ for which $F\left(  w\right)  >0$, and this parameter is one indicator of
the performance of a quantum serial turbo code employing constituent
convolutional encoders. Although one of the applications of our
EAQ\ convolutional encoders are as the inner encoders in a quantum serial
turbo coding scheme, we can also have them as outer encoders in a quantum
serial turbo code and use the free distance to show that its minimum distance
grows near-linearly when combined with a non-catastrophic, recursive inner encoder.

\section{Example Encoders}

\label{sec:examples}Our first example of an EAQ convolutional encoder is the
simplest example that is both recursive and non-catastrophic. It exploits one
memory qubit and one ebit to encode one information qubit per frame. We
discuss this first example in detail, verifying its non-catastrophicity and
recursiveness, and we give a method to compute its distance spectrum.
Appendix~\ref{sec:examples-of-encoders} then
gives tables that detail our other examples. We found all of these examples by
picking encoders uniformly at random from the Clifford group, according to the
algorithm in Section~VI-A.2 of Ref.~\cite{DLT02}.

The seed transformation for our first example is as follows:%
\begin{equation}%
\begin{array}
[c]{ccc}%
Z & I & I\\
I & Z & I\\
I & I & Z\\
X & I & I\\
I & X & I\\
I & I & X
\end{array}
\rightarrow%
\begin{array}
[c]{ccc}%
Z & I & X\\
X & Z & Y\\
X & Y & Z\\
X & X & X\\
Y & I & Y\\
Y & X & Y
\end{array}
, \label{eq:example-encoder}%
\end{equation}
where the first input qubit is the memory qubit, the second input qubit is the
information qubit, the third is Alice's half of the ebit, the first output
qubit is the memory qubit, and the last two outputs are the physical qubits.
We can abbreviate the above encoding by taking the binary representation of
each row at the output and encoding it as a decimal number. For example, the
first row $ZIX$ has the binary representation $\left[  \left.
\begin{array}
[c]{ccc}%
1 & 0 & 0
\end{array}
\right\vert
\begin{array}
[c]{ccc}%
0 & 0 & 1
\end{array}
\right]  $ which is the decimal number 33. Thus, we can specify this encoder
as $\left\{  33,29,30,7,45,47\right\}  $.\footnote{Note that this convention
is different from that of Poulin \textit{et al}.~in Ref.~\cite{PTO09}.}

The seed transformation in (\ref{eq:example-encoder}) leads to the state
diagram of Figure~\ref{fig:1st-example}, by exploiting (\ref{eq:state-diagram}%
). We can readily check that the encoder is non-catastrophic and recursive by
inspecting Figure~\ref{fig:1st-example}. The only cycle with zero physical
weight is the self-loop at the identity memory state with zero logical weight.
The encoder is thus non-catastrophic. To verify recursiveness, note again that
the only vertex belonging to a zero physical-weight cycle is the self-loop at
the identity memory state. We now consider all weight-one admissable paths
that begin in this state. If we input a logical $X$, we follow the edge
$\left(  X,IY\right)  $ to the $Y$ memory state. Inputting the identity
operator for the rest of time keeps us in the self-loop at memory state $Y$
while still outputting non-zero physical weight operators. We can then check
that inputting a $Z$ logical operator followed by identities keeps us in the
self-loop at the $X$ memory state while still outputting non-zero physical
weight operators, and inputting a $Y$ logical operator followed by identities
keeps us in the self-loop at the $Z$ memory state. The encoder is thus recursive.
Appendix~\ref{sec:examples-of-encoders} lists many more examples of
entanglement-assisted quantum convolutional encoders that are both recursive
and non-catastrophic.

%TCIMACRO{\FRAME{ftbpFU}{2.0358in}{2.8816in}{0pt}{\Qcb{The state diagram
%corresponding to the seed transformation in (\ref{eq:example-encoder}), acting
%on one memory qubit, one information qubit, one ebit, and producing two
%physical qubits. The above state diagram allows us to check that the encoder
%is both recursive and non-catastrophic.}}{\Qlb{fig:1st-example}}%
%{eaq-encoder-1st-example.pdf}{\special{ language "Scientific Word";
%type "GRAPHIC";  maintain-aspect-ratio TRUE;  display "USEDEF";
%valid_file "F";  width 2.0358in;  height 2.8816in;  depth 0pt;
%original-width 1.2332in;  original-height 1.7538in;  cropleft "0";
%croptop "1";  cropright "1";  cropbottom "0";
%filename '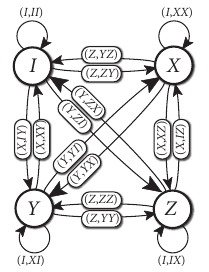';file-properties "XNPEU";}}}%
%BeginExpansion
\begin{figure}
[ptb]
\begin{center}
\includegraphics[
natheight=1.753800in,
natwidth=1.233200in,
height=2.8816in,
width=2.0358in
]%
{}%
\caption{The state diagram corresponding to the seed transformation in
(\ref{eq:example-encoder}), acting on one memory qubit, one information qubit,
one ebit, and producing two physical qubits. The above state diagram allows us
to check that the encoder is both recursive and non-catastrophic.}%
\label{fig:1st-example}%
\end{center}
\end{figure}
%EndExpansion
%

\section{EAQ Turbo Codes}

\label{sec:EAQ-turbo}We comment briefly on the construction and minimum
distance of a quantum serial turbo code that employs our example encoders as a
constituent encoder (the next section details our simulation results with
different encoders). The construction of an EAQ serial turbo code is the same
as that in Ref.~\cite{PTO09}\ (see Figure~10 there), with the exception that
we assume that Alice and Bob share entanglement in the form of ebits before
encoding begins. Alice first encodes her stream of information qubits with the
outer encoder, performs a quantum interleaver on all of the qubits, and then
encodes the resulting stream with the inner encoder. The quantum communication
rate of the resulting EAQ\ turbo code is $k^{\text{Out}}/n^{\text{In}%
}=(k^{\text{Out}}/n^{\text{Out}})(k^{\text{In}}/n^{\text{In}})$ where
$k^{\text{Out}}$ is the number of information qubits encoded by the outer
encoder, $n^{\text{Out}}$ is the number of physical qubits output from the
outer encoder, and a similar convention holds for $k^{\text{In}}$,
$n^{\text{In}}$, and the inner encoder. In order for the qubits to match up
properly, $n^{\text{Out}}$ must be equal to $k^{\text{In}}$. The entanglement
consumption rate of the code is $(c^{\text{Out}}+c^{\text{In}})/n^{\text{In}}$
where $c^{\text{Out}}$ and $c^{\text{In}}$ are the total number of ebits
consumed by the outer and inner encoder, respectively.

Perhaps the best combination for an EAQ\ serial turbo code is to choose the
inner quantum convolutional encoder to be a recursive, non-catastrophic EAQ
convolutional encoder and the outer quantum convolutional encoder to be a
non-recursive, non-catastrophic standard quantum convolutional encoder. This
combination reduces entanglement consumption, ensures good iterative decoding
performance, and ensures that the minimum distance of the quantum serial turbo
code grows as $N^{\left(  d^{\ast}-2\right)  /d^{\ast}}$ where $N$ is the
length of the code and $d^{\ast}$ is the free distance of the outer quantum
convolutional encoder. A proof of this last statement given in Ref.~\cite{OPT08} is
essentially identical to the classical proof~\cite{KU98}.
Section~\ref{sec:noise-ebits}\ shows that this combination also performs well
in practice if noise occurs on the ebits. Additionally, choosing the outer
quantum convolutional encoder to encode a highly degenerate code may increase
the number of errors that the quantum turbo code can correct, in a vein
similar to the results in Ref.~\cite{PhysRevA.57.830} (though we have not yet
fully investigated this possibility). Appendix~\ref{sec:examples-of-encoders} lists
many examples of entanglement-assisted quantum turbo codes and discusses their average
minimum distance scaling.

\section{Incorporating Extrinsic Information into the
Quantum Turbo Decoding Algorithm}
\label{sec:turbo-decode-mod}

Poulin {\it et al.} proposed an iterative decoding
algorithm for quantum turbo codes in Ref.~\cite{PTO09}.
Their algorithm is based on the exchange of
\textit{a posteriori} information between the 
constituent quantum convolutional decoders of a quantum turbo code.
Since the decoders pass along \textit{a posteriori} information,
successive iterations of the constituent decoders
are dependent on one another, and this gives rise to a detrimental positive
feedback effect, which prevents the decoding algorithm there
from achieving the desired gains usually observed in iterative decoding.

To avoid the aforementioned situation, it is necessary
to ensure that the \textit{a priori} information directly
related to a given information qubit is not reused in the other
constituent decoder~\cite{BGT93,tc_book_hanzo}. Similar
to the approach employed in classical turbo decoding~\cite{BGT93,SISO97},
this can be achieved by having one decoder remove \textit{a priori}
information from the \textit{a posteriori} information before feeding
it to the other decoder. More explicitly,
the iterative decoding procedure
should exchange only ``extrinsic'' information 
 which is unknown and new to the other decoder.

To see how this works, let us consider a four-port Soft-Input Soft-Output (SISO)
decoder~\cite{SISO97,vlc_book_hanzo} that generates
soft output information pertaining to a logical error
$L$ and physical error $P$. As shown in Figure~\ref{fig:SISO}, a
SISO decoder exploits an \textit{A Posteriori} Probability
(APP) module that accepts the \textit{a priori} information
$\mathbf{P}^a(L)$ and $\mathbf{P}^a(P)$ as input and outputs
the \textit{a posteriori} information $\mathbf{P}^o(L)$
and $\mathbf{P}^o(P)$. The corresponding
extrinsic probabilities $\mathbf{P}^e(L^j_i)$ and
$\mathbf{P}^e(P^j_i)$ for the $j^{\text{th}}$ qubit at time
instant $i$ are then obtained by discarding the \textit{a priori}
information from the \textit{a posteriori} information
as follows~\cite{SISO97,vlc_book_hanzo}:
\begin{align}
 \mathbf{P}^e(L^j_i) = N_{L^j}\frac{\mathbf{P}^o(L^j_i)}{\mathbf{P}^a(L^j_i)}, \nonumber \\
 \mathbf{P}^e(P^j_i) = N_{P^j}\frac{\mathbf{P}^o(P^j_i)}{\mathbf{P}^a(P^j_i)},
\label{eq:ext}
\end{align}
where $N_{L^j}$ and $N_{P^j}$ are normalization
factors, which ensure that $\sum_{L^j}\mathbf{P}^e(L^j_i) = 1$
and $\sum_{P^j}\mathbf{P}^e(P^j_i) = 1$, respectively.
Furthermore, to avoid any numerical instabilities and to reduce
the computational complexity, log-domain arithmetics
are conventionally
employed, which convert the multiplicative operations
of (\ref{eq:ext}) to addition, as given below~\cite{vlc_book_hanzo}:
\begin{align}
 \ln[\mathbf{P}^e(L^j_i)] &= \ln[\mathbf{P}^o(L^j_i)] - \ln[\mathbf{P}^a(L^j_i)], \nonumber \\
 \ln[\mathbf{P}^e(P^j_i)] &= \ln[\mathbf{P}^o(P^j_i)] - \ln[\mathbf{P}^a(P^j_i)].
\label{eq:log_ext}
\end{align}
Therefore, the inputs and outputs of a SISO decoder
are the logarithmic \textit{a priori} and extrinsic probabilities, respectively.
\begin{figure}
[ptb]
\begin{center}
\includegraphics[scale = 0.65]%
{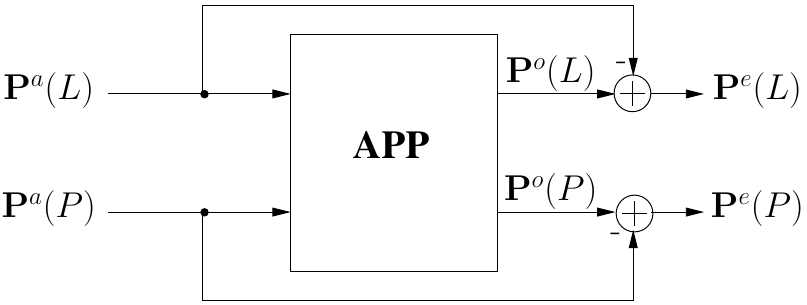}%
\caption{A SISO decoder processes soft \textit{a piori}
information to yield soft extrinsic information which is useful
for iterative decoding.
The quantities $\mathbf{P}^a(\cdot)$, $\mathbf{P}^e(\cdot)$ and $\mathbf{P}^o(\cdot)$
denote the {\it a priori}, extrinsic and {\it a posteriori} probabilities, while
$L$ and $P$ denote the logical and physical errors.}%
\label{fig:SISO}%
\end{center}
\end{figure}

Based on the above discussion, we have modified the turbo
decoding algorithm of Ref.~\cite{PTO09} so that the constituent decoders
pass along only
extrinsic information to each other. See Figure~\ref{fig:turbo_dec}
for a depiction of the modified quantum turbo decoding
algorithm \cite{BNH13}. In this figure, $A(y)$ and $E(y)$
denote the logarithmic \textit{a priori} and extrinsic probabilities 
of $y$, where $y \in \{L_1, L_2, P_1, P_2\}$. Analogous
to Ref.~\cite{PTO09}, the inner SISO decoder of
Figure~\ref{fig:turbo_dec} exploits the 
physical noise model $A(P_1)$, syndrome $S^x_1$ and \textit{a priori}
information $A(L_1)$, the last of which is initialized to be equiprobable.
However, it outputs the extrinsic information 
$E(L_1)$, rather than the \textit{a posteriori} information
as in Ref.~\cite{PTO09}. The extrinsic information
$E(L_1)$ is then interleaved to serve as the \textit{a priori}
information $A(P_2)$ for the outer SISO decoder. The two decoders
thereby engage in iterative decodings, which continue
until either the \textit{a posteriori} probability
$E(L_2)$ converges
to a definite solution, or a prespecified
maximum number of iterations is reached.\footnote{Since $A(L_2) = 0$,
the \textit{a posteriori} probability of $L_2$ is the same as
the extrinsic information, according to (\ref{eq:log_ext}).}
\begin{figure}
[ptb]
\begin{center}
\includegraphics[scale = 0.55]%
{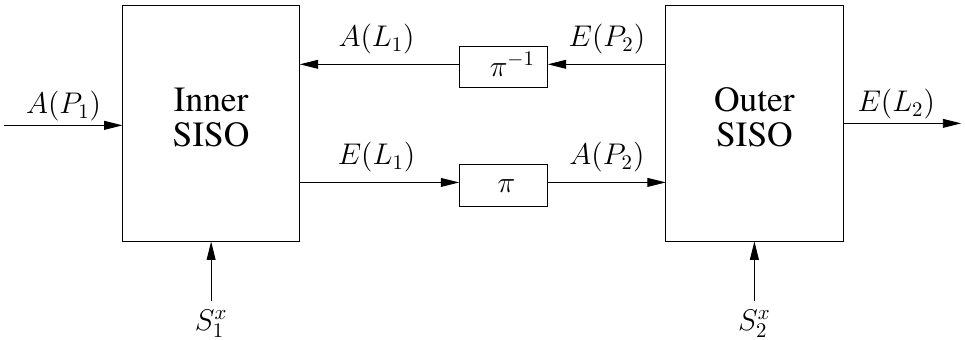}%
\caption{Schematic of the quantum turbo decoder algorithm 
that exploits extrinsic information rather than {\it a posteriori}
information as in Ref.~\cite{PTO09}. In the above, $A(\cdot)$ and $E(\cdot)$
denote the logarithmic {\it a priori} and
extrinsic probabilities, $L_i$ and $P_i$ are the logical
and physical errors, while $S^x_i$ represents the $X$-syndrome
for the $i^{\text{th}}$ decoder. The generalization of this algorithm
to the entanglement-assisted case is straightforward.}%
\label{fig:turbo_dec}%
\end{center}
\end{figure}

\section{Simulation Results}

\label{sec:sim-results}We performed several simulations of EAQ turbo codes and
detail the results in this section. This section begins with a description of
the parameters of the constituent quantum convolutional encoders. We then
describe how the simulation was run, and we finally discuss and interpret the
simulation results.

Our simulation results presented here are certainly not intended
to be an exhaustive comparison of codes, but they rather serve to illustrate
the enhancement in performance from the modified algorithm described in
Section~\ref{sec:turbo-decode-mod} and they
constitute an exploration
of the effect of adding entanglement assistance to the encoders of Poulin \textit{et
al}.~\cite{PTO09}---our original intent was to exploit their encoders, but the
lack of a clear exposition of their decimal representation convention has
excluded us from doing so. So we instead randomly generated and filtered
encoders that have comparable attributes to their encoders. The first encoder,
dubbed \textquotedblleft PTO1R,\textquotedblright\ acts on three input memory
qubits, one information qubit, and two ancillas to produce three output memory
qubits and three physical qubits. Its decimal representation is%
\[%
\begin{array}
[c]{c}%
\{1355,2847,558,2107,3330,739,\\
\ \ \ 2009,286,473,1669,1979,189\}
\end{array}
,
\]
and a truncated distance spectrum polynomial
(see Appendix~\ref{sec:distance-spectrum}) for it is%
\begin{multline*}
11x^{5}+47x^{6}+253x^{7}+1187x^{8}+\\
6024x^{9}+30529x^{10}+153051x^{11}+771650x^{12}.
\end{multline*}
This encoder is non-catastrophic and quasi-recursive, and its parameters and
distance spectrum are comparable to those of Poulin \textit{et al}.'s first
encoder (though they did not comment on whether theirs is
quasi-recursive)~\cite{PTO09}. Replacing the two ancilla qubits in the encoder
with two ebits gives an improvement in its truncated distance spectrum
polynomial:%
\begin{multline*}
2x^{9}+x^{10}+5x^{11}+8x^{12}+11x^{13}+25x^{14}+\\
56x^{15}+102x^{16}+217x^{17}+387x^{18}+787x^{19}.
\end{multline*}
The encoder also becomes recursive after this replacement. Let
\textquotedblleft PTO1REA\textquotedblright\ denote the EA\ version of this encoder.

Our second encoder, dubbed \textquotedblleft PTO3R,\textquotedblright\ acts on
four input memory qubits, one information qubit, and one ancilla qubit to
produce four output memory qubits and two physical qubits. Its decimal
representation is%
\[%
\begin{array}
[c]{c}%
\{3683,3556,2872,2211,3561,3534,\\
\ \ \ 729,3136,743,2643,1330,1656\}
\end{array}
,
\]
and a truncated distance spectrum polynomial for it is%
\begin{multline*}
12x^{5}+93x^{6}+600x^{7}+4320x^{8}+31098x^{9}+\\
224014x^{10}+1604435x^{11}+11469935x^{12}.
\end{multline*}
The encoder is non-catastrophic and quasi-recursive with its parameters and
distance spectrum comparable to those of Poulin \textit{et al}.'s third
encoder. Replacing the ancilla with an ebit gives an improvement to the
truncated distance spectrum polynomial:%
\begin{multline*}
x^{6}+3x^{7}+7x^{8}+29x^{9}+88x^{10}+237x^{11}+\\
716x^{12}+2166x^{13}+6245x^{14}+18696x^{15}+\\
55889x^{16}+165971x^{17}+492805x^{18}+1465529x^{19},
\end{multline*}
and the encoder becomes recursive after this replacement. Let
\textquotedblleft PTO3REA\textquotedblright\ denote the EA version of the
\textquotedblleft PTO3\textquotedblright\ encoder.

Our turbo codes consist of interleaved serial concatenation of the above
encoders with themselves, and we varied the auxiliary resources of the
encoders to be ebits, ancillas, or both. Concatenating PTO1R with itself leads
to a rate 1/9 quantum turbo code, and concatenating PTO3R with itself leads to
a rate 1/4 turbo code.%
\begin{figure}
[ptb]
\begin{center}
\includegraphics[
natheight=3.846700in,
natwidth=5.440500in,
height=2.4422in,
width=3.4411in
]%
{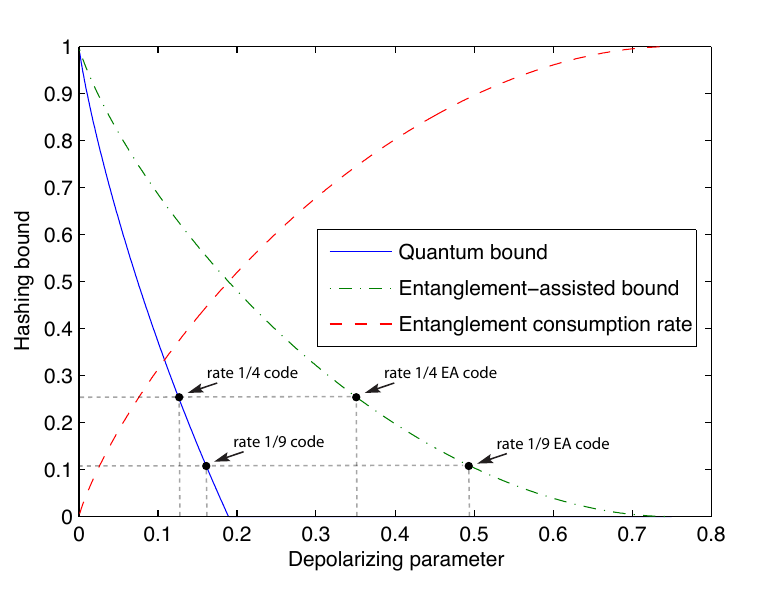}%
\caption{The figure plots the hashing bound for quantum codes (solid blue
curve), the entanglement-assisted hashing bound for EAQ codes (the dash-dotted
green curve), and the entanglement consumption rate of the father protocol
that meets the entanglement-assisted hashing bound (dashed red curve). The
figure plots these different bounds as a function of the depolarizing
parameter $p$. Observe that the noise limit for a rate 1/4 quantum code is
$\approx0.12689$, the noise limit for a rate 1/9 quantum code is
$\approx0.16028$, the noise limit for a rate 1/4 EAQ code is $\approx0.35454$,
and the noise limit for a rate 1/9 EAQ code is $\approx0.49088$.
We should compare our simulation results with these noise limits to observe how close
our code constructions come to achieving the quantum capacity (though note that the
unassisted hashing bound is only a lower bound on the unassisted quantum
capacity \cite{SS96,PhysRevA.57.830}).}%
\label{fig:hashing}%
\end{center}
\end{figure}

We simulated the performance of these EAQ turbo codes on a memoryless
depolarizing channel with parameter $p$. The action of the channel on a single
qubit density operator $\rho$\ is as follows:%
\[
\rho\rightarrow\left(  1-p\right)  \rho+\frac{p}{3}\left(  X\rho X+Y\rho
Y+Z\rho Z\right)  .
\]
The benchmarks for the performance of any standard or entanglement-assisted
quantum code on a depolarizing channel are the hashing
bounds~\cite{PhysRevA.54.3824,PhysRevLett.83.3081}. The quantum hashing bound
determines the rate at which a random quantum code can operate reliably for a
particular depolarizing parameter $p$, but note that it does not give the
ultimate capacity limit because degeneracy can improve
performance~\cite{PhysRevA.57.830,PhysRevLett.98.030501,PhysRevA.78.062335,PhysRevA.77.010301}%
. On the other hand, the entanglement-assisted hashing bound does give the
ultimate capacity at which an EAQ code can operate reliably for a particular
depolarizing parameter $p$, whenever an unbounded amount of entanglement is
available~\cite{PhysRevLett.83.3081,PhysRevA.66.052313}. The hashing bound for
quantum communication is%
\[
1-\left[  H_{2}\left(  p\right)  +p\log_{2}\left(  3\right)  \right]  ,
\]
and the hashing bound for an entanglement-assisted quantum code is%
\begin{equation}
1-\frac{1}{2}\left[  H_{2}\left(  p\right)  +p\log_{2}\left(  3\right)
\right]  . \label{eq:EA-hashing-bound}
\end{equation}
The father protocol is a quantum communication protocol that achieves the
entanglement-assisted quantum
capacity~\cite{PhysRevLett.93.230504,arx2005dev,arx2006anura}, while
attempting to minimize its entanglement consumption rate. Its entanglement
consumption rate is provably optimal for certain channels such as the
dephasing channel~\cite{HW08a,HW09,WH10a}, but it is not necessarily optimal
for the depolarizing channel. The entanglement consumption rate of the father
protocol on a depolarizing channel is%
\[
\frac{1}{2}\left[  H_{2}\left(  p\right)  +p\log_{2}\left(  3\right)  \right]
.
\]
This entanglement consumption rate implies that the only resources involved in
an encoded transmission of the father protocol are information qubits and
ebits because the sum of its quantum communication rate and its entanglement
consumption rate is equal to one. Figure~\ref{fig:hashing}\ plots these
different hashing bounds and depicts the location of these bounds for our
different quantum turbo codes.

One can also consider a ``hashing region'' of rates that are achievable
for entanglement-assisted quantum communication
\cite{arx2005dev,PhysRevLett.93.230504,LBW10}---this region is the more relevant
benchmark for an entanglement-assisted code that does not consume the maximal
amount of entanglement. For the case of a depolarizing channel, this region
consists of all $Q$ and $E$ that satisfy the following bounds:
\begin{align}
Q & \leq 1-\left[  H_{2}\left(  p\right)  +p\log_{2}\left(  3\right)  \right] + E
\label{eq:boundary-hashing-region-1}\\
Q & \leq 1-\frac{1}{2}\left[  H_{2}\left(  p\right)  +p\log_{2}\left(  3\right)
\right],
\end{align}
where $Q$ is the quantum communication rate and $E$ is the entanglement consumption
rate. The intersection of these two boundary lines occurs when
$E = \frac{1}{2}\left[  H_{2}\left(  p\right)  +p\log_{2}\left(  3\right)  \right]$,
the entanglement consumption rate of the father protocol.
When an entanglement-assisted quantum code does not consume the maximal amount of
entanglement possible (but rather at some rate
$E \leq \frac{1}{2}\left[  H_{2}\left(  p\right)  +p\log_{2}\left(  3\right)  \right]$),
we should compare its performance with the boundary in (\ref{eq:boundary-hashing-region-1}).
\begin{figure*}
[ptb]
\begin{center}
\includegraphics[
%natheight=10.019700in,
% natwidth=12.440300in,
height=5.93in,
]%
{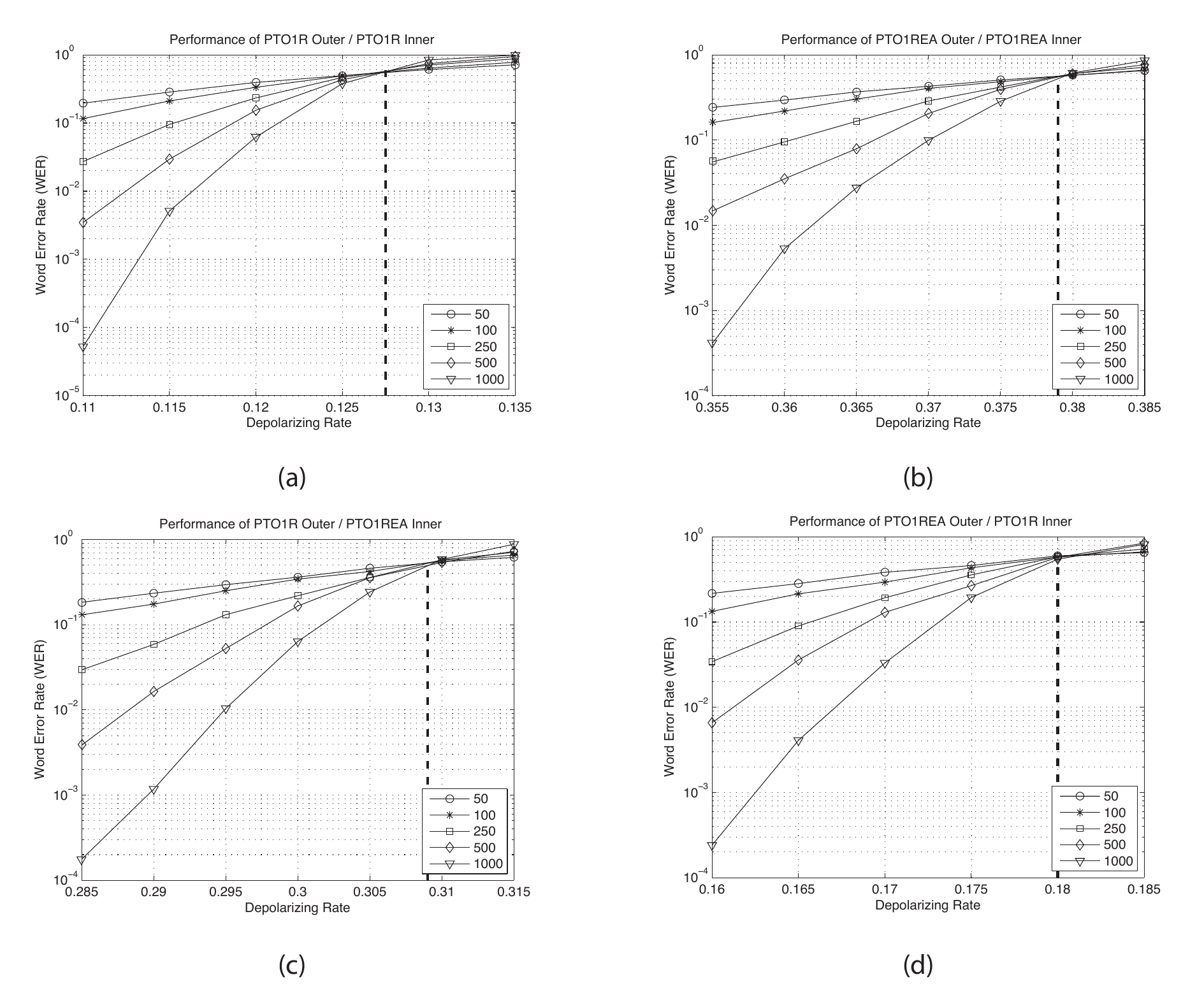}%
\caption{The figure plots the results of four different simulations that
exploit the PTO1R and PTO1REA\ encoders. The resulting EAQ\ turbo codes have
rate 1/9 and varying entanglement consumption rates. Each plot demonstrates
the existence of a threshold, where the code performance increases whenever
the channel noise rate is below the threshold. The vertical dotted line in
each plot indicates the approximate location of the threshold.}%
\label{fig:PTO1-performance}%
\end{center}
\end{figure*}

\begin{figure*}
[ptb]
\begin{center}
\includegraphics[
%natheight=9.986900in,
%natwidth=12.253500in,
height=6.0009in,
]%
{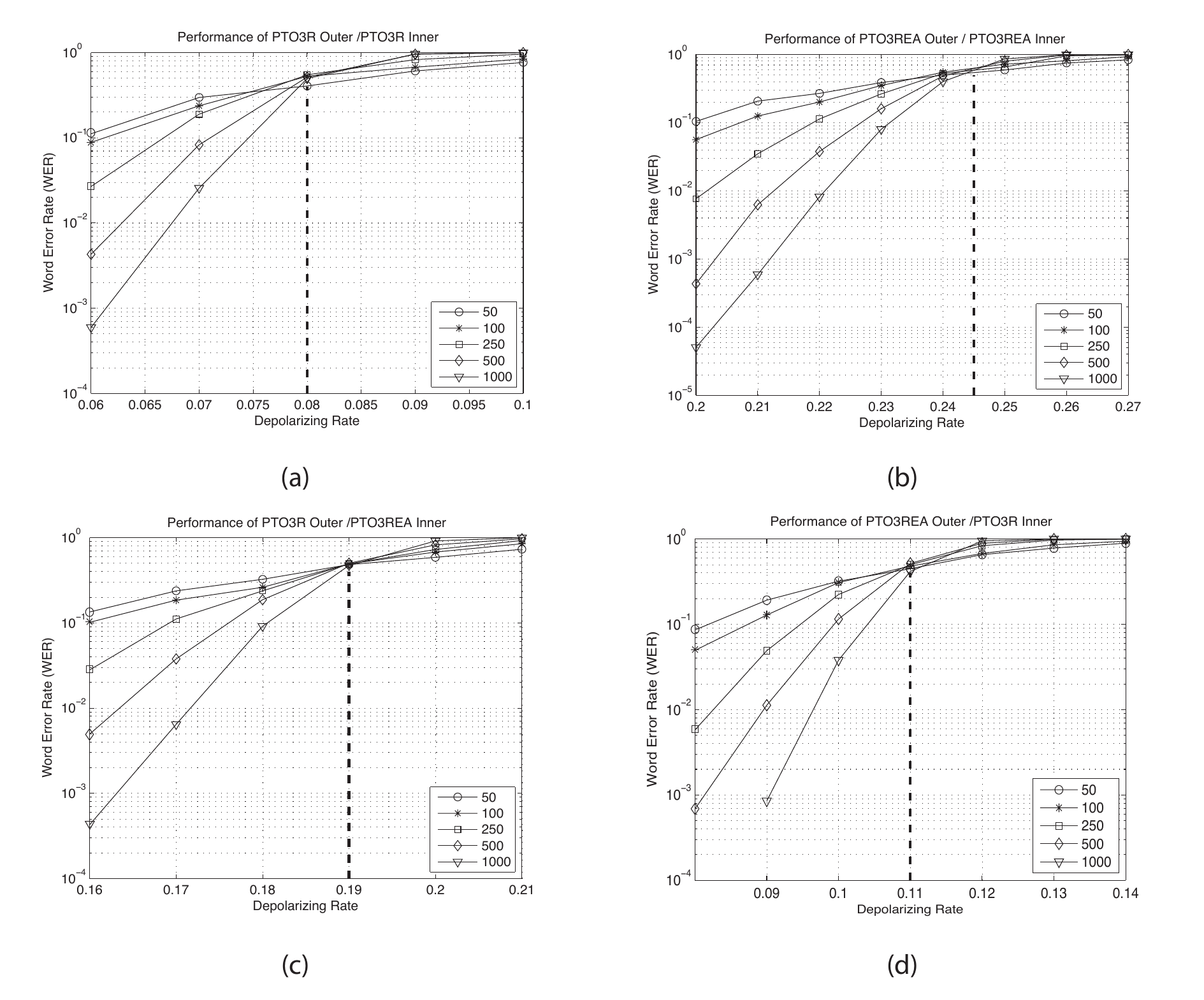}%
\caption{The figure plots the results of four different simulations that
exploit the PTO3R and PTO3REA\ encoders. The resulting EAQ\ turbo codes have
rate 1/4 and varying entanglement consumption rates. Each plot demonstrates
the existence of a threshold, where the code performance increases whenever
the channel noise rate is below the threshold. The vertical dotted line in
each plot indicates the approximate location of the threshold.}%
\label{fig:PTO3-performance}%
\end{center}
\end{figure*}

We performed Monte Carlo simulations to determine the performance of our
example EAQ turbo codes when decoded with an iterative decoding algorithm (the
iterative decoding algorithm is as described in
Section~\ref{sec:turbo-decode-mod}, with
the exception that it decodes EAQ turbo codes). We developed a Matlab computer
program for these simulations~\cite{W10}.
%, and we conducted these simulations for about one month on
%approximately 70 Ubuntu Linux servers in the School of Computer Science at
%McGill University.

A single run of a simulation selects a quantum turbo code with a particular
number of logical qubits and a random choice of interleaver, and it then
generates a Pauli error randomly according to a depolarizing channel with
parameter $p$. This Pauli error leads to an error syndrome, and the syndrome
and channel model act as inputs to the iterative decoding algorithm. The
iterative decoding algorithm terminates if the hard decision on a recovery
operation is the same decision from the previous iteration, or it terminates
after a maximum number of iterations (though we never observed the number of
iterations until convergence exceeding eight). One run of the simulation
declares a failure if the estimated recovery operation is different from the
correct recovery operation. The ratio of simulation failures to the total
number of simulation runs is the \textit{word error rate} (WER). In the cases
in which errors occur more rarely, we ran every
configuration (choice of code, depolarizing parameter, and number of logical
qubits) until we observed at least 100 failures---this number gaurantees a
reasonable statistical confidence in the results of the simulations. Of course,
in the cases where errors occur more frequently, we observed thousands of errors.

Our first simulation involved the serial concatenation of the unassisted PTO1R
encoder with itself, and Figure~\ref{fig:PTO1-performance}(a) displays the
results. The performance significantly exceeds that of the first encoder of Poulin
\textit{et al}.~(see Figure~12\ of Ref.~\cite{PTO09}) and there is
an improved separation between the curves for increasing blocklength, 
when comparing Figure~\ref{fig:PTO1-performance}(a) to 
Figure~12\ of Ref.~\cite{PTO09}. A noticeable feature of
Figure~\ref{fig:PTO1-performance}(a) is the existence of a pseudothreshold,
such that increasing the number of encoded qubits of the turbo code decreases
the WER\ for all depolarizing noise rates below the pseudothreshold. The
pseudothreshold is not a true threshold because this particular turbo code has
a bounded minimum distance,\footnote{As discussed in Section~4.4 of Ref.~\cite{PTO09},
the minimum distance of a quantum turbo code is upper bounded by the weight-one
minimum distance of the inner encoder times the free distance of the outer encoder. 
Whenever the inner encoder is non-recursive, the weight-one minimum distance is bounded by some
constant (not growing with the blocklength), implying a constant
bound on the minimum distance of the turbo code.} and at some point, the WER\ should begin
increasing if we continue to increase the number of encoded
qubits~\cite{PTO09}. The pseudothreshold occurs at a depolarizing noise rate
$\approx 0.1275$, and this pseudothreshold is within $0.994$\ dB of the
$0.16028$\ noise limit for a rate $1/9$ code.

Our second simulation tested the serial concatenation of the PTO1REA encoder
with itself, and Figure~\ref{fig:PTO1-performance}(b) displays the results.
This turbo code uses the maximal amount of entanglement at a rate of $8/9$ and
thus is an instance of the so-called ``father'' protocol. Entanglement assistance gives the
turbo code a dramatic increase in performance, in the sense that it can withstand
far higher depolarizing noise rates than the unassisted turbo code. The
threshold occurs at $\approx0.379$---we call
this a threshold rather than a pseudothreshold
because we expect that the WER\ should
continue to decrease as we increase the
number of encoded qubits, but we should clarify that we have
not proven that this should hold. However,
the EXIT chart analysis of Ref.~\cite{BNH13} suggests that this is
a true threshold. This
threshold is $4.73$\ dB beyond the pseudothreshold of the unassisted turbo
code, and it is within $1.12$~dB of the $0.49088$~noise limit
given by the
EA\ hashing bound in (\ref{eq:EA-hashing-bound})
for
a qubit rate $1/9$ and ebit rate $8/9$ code.
This code construction is operating in a noise regime
in which standard quantum codes are simply not able to operate (compare with
the results in Refs.~\cite{MMM04,HI07,COT07,PTO09,HBD09,TL10,KHIS10,FCVBT10}).

Our third simulation tested the serial concatenation of the PTO1REA inner
encoder with the PTO1R outer encoder, and this EAQ\ turbo code has an
entanglement consumption rate of $6/9$. The benchmark for comparison is given
by (\ref{eq:boundary-hashing-region-1}), so that for a
code with qubit rate $1/9$ and ebit rate $6/9$, its noise limit is found
by solving for the $p$ for which
$$
1/9 - 6/9 = -5/9 = 1-\left[  H_{2}\left(  p\right)  +p\log_{2}\left(  3\right)  \right],
$$
which in this case is $p \approx 0.3779$.
The inner encoder is recursive, but
the outer encoder's free distance is not as high as that in the previous
simulation.
%Thus, we should expect performance not to be as high as in the
%previous simulation.
Figure~\ref{fig:PTO1-performance}(c) displays the
results. The threshold occurs
approximately at a depolarizing noise rate of $0.309$, which is within
$0.887$~dB of the previous threshold and within $0.87$~dB of the $0.3779$%
~noise limit given above.

Our final simulation tested the serial concatenation of the PTO1R inner
encoder with the PTO1REA outer encoder, and this EAQ turbo code has an
entanglement consumption rate of $2/9$. Again, the benchmark for comparison is given
by (\ref{eq:boundary-hashing-region-1}), so that for a
code with qubit rate $1/9$ and ebit rate $2/9$, its noise limit is found
by solving for the $p$ for which
$$
1/9 - 2/9 = -1/9 = 1-\left[  H_{2}\left(  p\right)  +p\log_{2}\left(  3\right)  \right],
$$
which in this case is $p \approx 0.2209$. The inner encoder is quasi-recursive,
and the outer encoder has a significantly higher free distance than in the
previous simulation because it has entanglement assistance. Though, one can place
a constant upper bound on the minimum distance of these turbo codes because
the inner encoder is not recursive. This simulation
provides a good test to determine the effectiveness of an inner encoder that
is quasi-recursive. That is, one might think that quasi-recursiveness of the
inner encoder combined with an outer encoder with high free distance would be
sufficient to produce a turbo code with good performance (the thought is that
this would explain the good performance in the first simulation), but
Figure~\ref{fig:PTO1-performance}(d) suggests that this intuition does not
hold. The pseudothreshold occurs at approximately $0.18$. This
pseudothreshold is only $1.50$~dB higher than the threshold for the unassisted
code, and it is $0.889$~dB away from the noise limit of $0.2209$ given above.

We conducted similar simulations with the PTO3R and PTO3REA encoders, and
Figures~\ref{fig:PTO3-performance}(a-d)\ display the results. The entanglement
consumption rates in Figures~\ref{fig:PTO3-performance}(a-d) are $0$, $3/4$,
$2/4$, and $1/4$, respectively. The results are somewhat similar to the previous
simulations with the difference that the noise limits and thresholds are lower
because these codes have higher quantum data transmission rates. The thresholds
occur at approximately $0.08$, $0.245$, $0.19$, and $0.11$
in Figures~\ref{fig:PTO3-performance}(a-d), respectively, and the hashing limits
for these codes are approximately $0.1268$,
$0.3545$, $0.2636$, and $0.1893$, respectively. Thus, these codes are within
2~dB, 1.6~dB, 1.4~dB, and 2.35~dB of their hashing limits, respectively.
%There is one feature
%of the result in Figure~\ref{fig:PTO3-performance}(c) that is worth
%discussing.
Since the threshold of the code in
Figure~\ref{fig:PTO3-performance}(c) occurs at $0.19$, so that it is
performing closest to its hashing limit (1.4 dB away), it appears
that this EAQ\ turbo code is making judicious use of the available
entanglement by placing the two ebits in the inner encoder. We would have to
conduct further simulations to determine if placing one ebit in the outer
encoder and one in the inner encoder would do better, but our suspicion is
that the aforementioned use of entanglement is better because the inner
encoder is recursive under this choice.

\subsection{Noise on Ebits}

\label{sec:noise-ebits}We conducted another set of simulations to determine
how noise on Bob's half of the ebits affects the performance of a code. This
possibility has long been one of the important practical questions concerning
the entanglement-assisted paradigm, and some researchers have provided partial
answers~\cite{prep2007shaw,WF09,WH10,PhysRevA.79.042342,PhysRevA.86.032319}. We briefly review
some of these contributions. Shaw \textit{et al}.~first observed that the
$\left[  \left[  7,1,3\right]  \right]  $ Steane code is equivalent to a
$\left[  \left[  6,1,3;1\right]  \right]  $ entanglement-assisted code that
can also correct a single error on Bob's half of the ebit. This observation
goes further:\ any standard, non-degenerate $\left[  \left[  n,k,d\right]
\right]  $\ quantum code is equivalent to an $\left[  \left[
n-d+1,k,d;d-1\right]  \right]  $ entanglement-assisted code that can correct
any $\left\lfloor \left(  d-1\right)  /2\right\rfloor $ errors on Alice and
Bob's qubits. This result holds because tracing over any $n-d+1$ qubits in the
original standard code gives a maximally mixed state on $d-1$ qubits, and the
purification of these qubits are $n-d+1$ encoded halves of ebits that Alice
possesses~\cite{P99}. Lai and Brun
have studied this observation in more detail, by conducting simulations of such
entanglement-assisted codes, and they have also studied the case in which a standard stabilizer code
is used to protect the ebits \cite{PhysRevA.86.032319}.
Wilde and Fattal observed that entanglement-assisted
codes correcting for errors on Bob's side slightly improve the threshold for
quantum computation if ebit errors occur less frequently than gate
errors~\cite{WF09}. Hsieh and Wilde studied this question in the
Shannon-theoretic context and determined an expression for capacity when
channel errors and entanglement errors occur~\cite{WH10}. As a side note, Lai
and Brun have also looked for codes attempting to achieve the opposite
goal~\cite{LB10}---their codes try to maximize the number of channel errors
that can be corrected while minimizing the correction on Bob's half of the ebits.

\begin{figure}
[ptb]
\begin{center}
\includegraphics[
%natheight=4.373400in,
%natwidth=5.833200in,
%height=2.5849in,
width=3.7411in
]%
{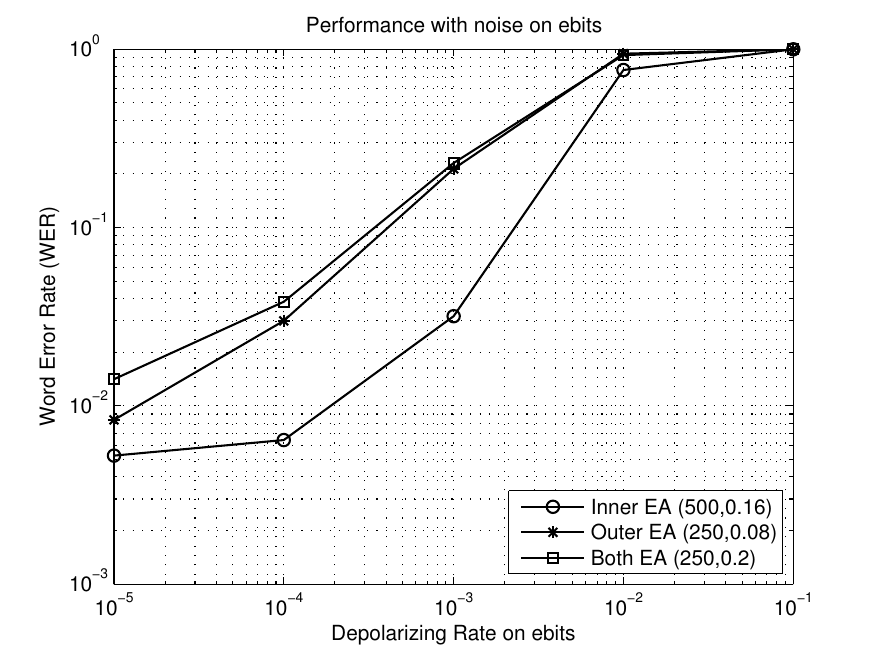}%
\caption{The figure plots the performance of different inner and outer encoder
combinations as a function of noise rate on Bob's half of the ebits. The number of
logical qubits and channel noise for each combination is indicated in the figure legend. It is
surprising that the combination PTO3REA inner / PTO3R outer can withstand more
ebit noise than the combination PTO3R inner / PTO3REA outer even though it
uses more ebits than this latter combination.}%
\label{fig:noise-on-ebits}%
\end{center}
\end{figure}
Figure~\ref{fig:noise-on-ebits}\ plots the results of our simulations that
allow for noise on Bob's half of the ebits. We performed three different types
of simulations:\ the first was with the PTO3REA inner / PTO3R outer
combination, the second with the PTO3R inner / PTO3REA outer combination, and
the third with the PTO3REA inner / PTO3REA outer combination. For each
simulation, we kept the channel noise rates fixed at $0.16$, $0.08$, and
$0.2$, respectively, because the codes already performed reasonably well at
these noise rates, and our goal was to understand the effect of ebit noise on
code performance. The codes all performed about the same as they did without
ebit noise if we set the ebit noise rate at $10^{-5}$. Increasing the ebit
noise rate an order of magnitude to $10^{-4}$ has the least effect on the
PTO3REA inner / PTO3R outer combination. Increasing it further to $10^{-3}$
deteriorates the performance of the other combinations while still having the
least effect on the PTO3REA inner / PTO3R outer combination. This result is
surprising, considering that this combination has an entanglement consumption
rate of $2/4$ while the entanglement consumption rate of the PTO3R inner /
PTO3REA outer combination is smaller at $1/4$. The result suggests that it
would be wiser to place ebits in the inner encoder rather than in the outer
encoder when these codes operate in practice.

Figure~\ref{fig:noise-on-ebits-1000} seems to confirm this suggestion. Increasing the number
of logical qubits to 1000 for each combination shows an increased performance for the combination
PTO3REA inner / PTO3R outer if the ebit noise level is not
too high, while the other two combinations perform worse. It is surprising that
the combination PTO3REA inner / PTO3R outer performs better---increasing 
the number of logical qubits in turn increases the number of ebits, having more ebits should translate
into more noise on the syndromes, and this then should affect performance. However, this
particular combination seems to exhibit some amount of robustness against ebit noise if it is not too high.

\begin{figure}
[ptb]
\begin{center}
\includegraphics[
width=3.7411in
]%
{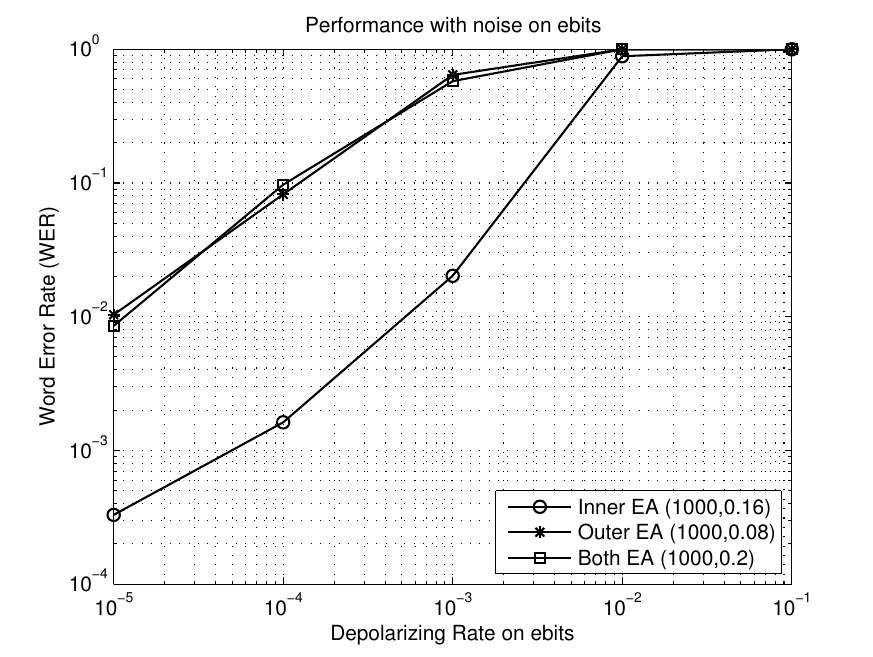}%
\caption{The figure again plots the performance of different inner and outer encoder
combinations as a function of noise rate on Bob's half of the ebits, though we increased
the number of logical qubits in each combination to be 1000. Interestingly,
the combination PTO3REA inner / PTO3R outer shows an increased performance
after increasing the number of logical qubits to 1000, at least for ebit noise levels at
$10^{-3}$ or below. The other combinations perform worse after increasing the number of
logical qubits to 1000.}%
\label{fig:noise-on-ebits-1000}%
\end{center}
\end{figure}

\section{Recursive, Classically-Enhanced Subsystem Encoders are Catastrophic}

\label{sec:corollary}We can construct other variations of quantum convolutonal
encoders and study their state diagrams to determine their properties. One
example of a variation mentioned at the end of Ref.~\cite{PTO09}\ is a
subsystem convolutional code (based on the idea of a subsystem quantum
code~\cite{kribs:180501,qic2006kribs,poulin:230504} that is useful in
fault-tolerant quantum computation~\cite{aliferis:220502}). Poulin \textit{et
al}.~suggest that subsystem convolutional codes might be \textquotedblleft a
concrete avenue\textquotedblright\ for circumventing the inability of a
quantum convolutional encoder to be simultaneously recursive and
non-catastrophic. Such codes exploit a resource known as a \textquotedblleft
gauge\textquotedblright\ qubit that can add extra degeneracy beyond that
available in a standard stabilizer code. Another variation is an encoder that
encodes both classical bits and qubits, and we might wonder if these could be
simultaneously recursive and non-catastrophic (such codes are known as
\textquotedblleft classically-enhanced\textquotedblright%
\ codes~\cite{kremsky:012341} and are based on trade-off coding ideas from
quantum Shannon theory~\cite{DS05}).

Unfortunately, encoders that act on logical qubits, classical bits, ancilla
qubits, and gauge qubits cannot possess both properties simultaneously, and we
state this result below as a corollary of Theorem~1 in Ref.~\cite{PTO09}. This
result implies that entanglement is \textit{the} resource enabling a
convolutional encoder to be both recursive and non-catastrophic (there are no
other known local resources for quantum codes besides ancilla qubits,
classical bits, and gauge qubits).

\begin{corollary}
\label{cor:ce-sub-r-nc}Suppose that a classically-enhanced subsystem
convolutional encoder is recursive. Then it is catastrophic.
\end{corollary}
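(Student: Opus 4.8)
The plan is to reduce Corollary~\ref{cor:ce-sub-r-nc} to the already-cited Theorem~1 of Ref.~\cite{PTO09}, which states that a standard quantum convolutional encoder (acting only on information qubits, ancilla qubits, and gauge qubits) cannot simultaneously be recursive and non-catastrophic. The essential observation is that classical bits and gauge qubits are \emph{local} resources that, unlike ebits, do not give Bob any additional distinguishing power at the decoder. The entire point of the preceding sections is that ebits break the obstruction because all four Pauli errors on half of a Bell state are distinguishable by a Bell measurement, whereas a $Z$ operator on an ancilla (or the corresponding invariances on a gauge qubit or classical bit) leaves the syndrome unchanged and thereby creates exactly the zero physical-weight, nonzero logical-weight cycles that make an encoder catastrophic.

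First I would recall the structure of the proof of non-catastrophicity obstruction from Ref.~\cite{PTO09}: the argument there hinges on the commutativity constraints that the stabilizer and logical operators must satisfy in the absence of entanglement, which forces the existence of a weight-one logical operator that the encoder maps to a finite-weight output (contradicting recursiveness) whenever the state diagram has no catastrophic cycle. Second, I would argue that a classically-enhanced subsystem encoder can be viewed, for the purposes of analyzing its state diagram, as a standard encoder to which Theorem~1 applies. The key structural fact is that a classical bit behaves like a logical qubit that is only ever assigned $Z$-type logical operators (it carries no conjugate $X$ information that needs protecting against a phase error), and a gauge qubit behaves like an extra degree of freedom whose logical action is considered trivial (it adds degeneracy, i.e.\ it \emph{enlarges} the isotropic/harmless subgroup $C(I)$ rather than the harmful one). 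Neither of these alters the commutativity structure in a way that would relax the constraints of Theorem~1; they only add operators to the harmless class, which is precisely the mechanism that \emph{creates} zero physical-weight, nonzero logical-weight cycles.

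Concretely, I would carry out the steps as follows. I would take the binary matrix $U$ specifying the classically-enhanced subsystem seed transformation and partition its input registers into information qubits, classical bits, ancilla qubits, and gauge qubits. I would then construct an associated standard encoder $U'$ by treating the classical bits and gauge qubits under the appropriate invariance: for gauge qubits, both $X$- and $Z$-type actions are absorbed into the harmless class, and for classical bits, the $Z$-type action is harmless while the bit value is read off the syndrome exactly as the $s_i$ are read off the ancillas in~(\ref{eq:decoded-Paulis}). The crucial claim is that the state diagram of the classically-enhanced subsystem encoder, built via the analogue of (\ref{eq:state-diagram}) where the quotient is taken over the enlarged harmless subgroup, has \emph{strictly more} edges entering the harmless class than the corresponding ebit-free standard encoder. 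Invoking Theorem~1 of Ref.~\cite{PTO09} on the underlying qubit structure then yields a zero physical-weight cycle of nonzero logical weight, and I would finally verify that this cycle survives the enlargement of the harmless subgroup, establishing catastrophicity.

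The hard part will be step two: making precise the sense in which a gauge qubit and a classical bit can be subsumed into the hypotheses of Theorem~1 without smuggling in the very distinguishing power that only entanglement provides. The delicate point is that a gauge qubit adds degeneracy (enlarging $C(I)$), and one must check that this enlargement cannot \emph{remove} a catastrophic cycle by making a previously harmful path harmless in a way that also kills the logical weight — i.e.\ I must show the added invariances act on the \emph{ancilla/gauge} portion of the binary vector and not on the logical label $L$, so that a nonzero logical weight along the cycle is preserved. Establishing that the enlargement strictly monotonically increases the harmless class (and so can only help create, never destroy, catastrophic cycles) is where the commutativity bookkeeping of Ref.~\cite{PTO09} must be invoked carefully, and it is the genuine content distinguishing this corollary from the ebit-assisted positive results of Section~\ref{sec:examples}.
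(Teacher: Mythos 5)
Your overall route is the paper's route: convert the classical bits and gauge qubits into ancilla qubits, invoke Theorem~1 of Ref.~\cite{PTO09} on the resulting standard stabilizer encoder, and lift the catastrophic cycle back to the larger state diagram. Your closing worry --- that enlarging the harmless class might destroy the cycle's logical weight --- is real but easily dispatched, and you sketch the right resolution: the substitution never touches the logical register, so the stabilizer encoder's state diagram is a subgraph of the diagram built from (\ref{eq:unified-code-diagram}) with edge labels intact (edges with $L_c^x = I$, $G^x = I$), and a zero physical-weight cycle with nonzero $L_q$-weight persists verbatim once edges are added. In the paper this is a one-line observation, not the hard part.

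The genuine gap is upstream. To invoke Theorem~1 on the substituted encoder you must first show that \emph{it} is recursive, and your proposal never establishes this transfer. The paper's crux is the coset inclusion $C^{\prime}\left( L^{\prime}\right) \subset C\left( L^{\prime}\right)$: replacing cbits and gauge qubits by ancillas deletes the $L_c^x$ and $G^x$ components from each coset, the weight-one logical inputs of the new encoder form a subset of the old ones, and the property ``every coset element has infinite weight'' is trivially inherited by subsets --- so recursiveness is preserved under the substitution. Your substitute for this step --- an appeal to unaltered ``commutativity structure'' plus a plan to reopen the internal proof of Theorem~1 --- does not deliver that implication, and reopening Theorem~1 is unnecessary: the paper uses it as a black box. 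Two further inaccuracies would derail your bookkeeping if carried out literally. First, Theorem~1 of Ref.~\cite{PTO09} concerns encoders on information and ancilla qubits only; it does not already cover gauge qubits, so your parenthetical statement of it begs part of the question the corollary is meant to settle. Second, in the paper's convention the logical operator carried by a classical bit is the $X$-type flip $L_c^x$, while $C^z$ is harmless --- the opposite of your ``only ever assigned $Z$-type logical operators'' --- and it is precisely the $X$ components that must be removed from the cosets when a cbit becomes an ancilla. You also leave unaddressed the boundary case the paper flags: the argument requires a nonzero number of logical qubits, since, e.g., the encoder of Figure~\ref{fig:simple-encoder} becomes both recursive and non-catastrophic when its logical qubit is replaced by a cbit and its ebit by an ancilla.
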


\begin{IEEEproof}
First, consider that the state diagram for a classically-enhanced subsystem
quantum convolutional encoder includes an edge from $M$ to $M^{\prime}$ if
there exists a $k_{q}$-qubit Pauli operator $L_{q}\in\left\{  I,X,Y,Z\right\}
^{k_{q}}$, an $a$-qubit Pauli operator $S^{z}\in\left\{  I,Z\right\}  ^{a}$, a
$k_{c}$-qubit Pauli operator $L_{c}^{x}+C^{z}\in\left\{  I,X,Y,Z\right\}
^{k_{c}}$, and a $g$-qubit Pauli operator $G^{x}+G^{z}\in\left\{
I,X,Y,Z\right\}  ^{g}$ such that%
\begin{equation}
\left(  M^{\prime}:P\right)  =(M:L_{q}:S^{z}:L_{c}^{x}+C^{z}:G^{x}+G^{z})U,
\label{eq:unified-code-diagram}%
\end{equation}
where the binary representations of Pauli operators are the same as in
(\ref{eq:state-diagram}). We break the operator acting on the $k_{c}$
classical bits into two parts because $L_{c}^{x}$ represents the $X$ component
of the operator that can flip a classical bit from $\left\vert 0\right\rangle
$ to $\left\vert 1\right\rangle $ and back, while $C^{z}$ represents the $Z$
component of the operator that has no effect on a classical bit in state
$\left\vert 0\right\rangle $ or $\left\vert 1\right\rangle $ (it merely adds
an irrelevant global phase in the case that the bit is equal to $\left\vert
1\right\rangle $). The entries $G^{x}$ and $G^{z}$ are for the
\textquotedblleft gauge qubits\textquotedblright\ in a subsystem
code~\cite{poulin:230504} (qubits in a maximally mixed state that are
invariant under the random application of an arbitrary Pauli operator). We
include these transitions in the state diagram because a particular logical
operator $L\equiv\left(  L_{q}:L_{c}^{x}\right)  $ in a classically-enhanced
subsystem code is equivalent up to $Z$ operators acting on the ancilla qubits,
$Z$ operators acting on the classical bits, and $X$ and $Z$ operators acting
on gauge qubits. The set $C\left(  L\right)  $ corresponding to a particular
logical input $L\equiv\left(  L_{q}:L_{c}^{x}\right)  $ consists of all
operators $P$\ of the following form:%
\[
C\left(  L\right)  \equiv\left\{
\begin{array}
[c]{c}%
P:P=(L_{q}:S^{z}:L_{c}^{x}+C^{z}:G^{x}+G^{z})V,\\
S^{z}\in\left\{  I,Z\right\}  ^{a},\ \ C^{z}\in\left\{  I,Z\right\}  ^{k_{c}%
},\\
G^{x}+G^{z}\in\left\{  I,X,Y,Z\right\}  ^{g}%
\end{array}
\right\}  ,
\]
where $V$ in this case consists of the infinite, repeated, overlapping
application of the convolutional encoder $U$.

Suppose that the encoder is recursive. By definition, it follows that every
weight-one logical input $L$ and its coset $C\left(  L\right)  $ has an
infinite response. Suppose now that we change the resources in the encoder so
that all of the classical bits and gauge qubits become ancilla qubits. The
resulting code is now a standard stabilizer code. Additionally, we can remove
the $X$\ operators $L_{c}^{x}$ from the definition of $C\left(  L\right)  $
because they no longer play a role as a logical operator, and we can remove
the $X$\ operators $G^{x}$ from the definition of $C\left(  L\right)  $
because they are acting on ancilla qubits. Let $C^{\prime}\left(  L^{\prime
}\right)  $ denote the new set corresponding to a particular logical operator
$L^{\prime}\equiv\left(  L_{q}:I_{k_{c}}\right)  $:%
\[
C^{\prime}\left(  L^{\prime}\right)  \equiv\left\{
\begin{array}
[c]{c}%
P:P=(L_{q}:S^{z}:C^{z}:G^{z})V,\\
S^{z}\in\left\{  I,Z\right\}  ^{a},\ \ C^{z}\in\left\{  I,Z\right\}  ^{k_{c}%
},\\
G^{z}\in\left\{  I,Z\right\}  ^{g}%
\end{array}
\right\}  .
\]
The encoder is still recursive because $C^{\prime}\left(  L^{\prime}\right)
\subset C\left(  L^{\prime}\right)  $ and because the original encoder is
recursive (recursiveness is a property invariant under the replacement of
cbits and gauge qubits with ancilla qubits). Then Theorem~1 of
Ref.~\cite{PTO09} implies that the encoder for the stabilizer code is
catastrophic, i.e., it features a zero physical-weight cycle with non-zero
logical weight. It immediately follows that the original encoder for the
classically-enhanced subsystem code is catastrophic because its state diagram
contains all of the edges of the stabilizer encoder, plus additional edges
that correspond to the logical transitions for the classical bits and the $X$
operators acting on the gauge qubits.
\end{IEEEproof}

We should note that the above argument only holds if the original
classically-enhanced subsystem convolutional encoder acts on a non-zero number
of logical qubits (this of course is the case in which we are really
interested in order to have a non-zero quantum communication rate). For
example, the encoder in Figure~\ref{fig:simple-encoder}\ becomes recursive and
non-catastrophic when replacing the logical qubit with a classical bit and the
ebit with an ancilla. One can construct the state diagram from the
specification in (\ref{eq:unified-code-diagram}) and discover that these two
properties hold for the encoder acting on the cbit and ancilla.

The above argument also does not apply if an ebit is available as an auxiliary
resource for an encoder. We have found an example of a non-catastrophic,
recursive encoder that acts on six memory qubits, one logical qubit, one
ancilla qubit, one ebit, one cbit, and one gauge qubit. The seed
transformation for this example is as follows:%
\[%
\begin{array}
[c]{c}%
\{2934387,3034525,2323986,3804870,964819,2111270,\\
\ \ \ 3722470,1017852,2678995,912985,1575948,3138834,\\
\ \ \ 142503,317898,3473283,1492935,2701400,2492844,\\
\ \ \ 2078030,1894781,915544,75029\}.
\end{array}
\]

A non-catastrophic, non-recursive subsystem convolutional encoder with a high
free distance could potentially serve well as an outer encoder of a quantum
turbo code, but this might not necessarily be beneficial if our aim is to
achieve the capacity of a quantum channel. The encoder for a subsystem code is
effectively a noisy encoding map because it is a unitary acting on information
qubits in a pure state, ancilla qubits in a pure state, and gauge qubits in
the maximally mixed state. Ref.~\cite{HW08a} proves that an isometric encoder
is sufficient to attain capacity, and we can thus restrict our attention to
subspace codes rather than subsystem codes. Though, this line of reasoning
does not rule out the possibility that iterative decoding could somehow
benefit from the extra degeneracy available in a subsystem code, but this
extra degeneracy would increase the decoding time.

\section{Classically-Enhanced EAQ Encoders}

\label{sec:CE-EAQ}

We might also wish to construct classically-enhanced EAQ\ turbo codes that
transmit classical information in addition to quantum information, in an
effort to reach the optimal trade-off rates from quantum Shannon
theory~\cite{HW08a,HW09}. These codes are then based on the structure of the
codes in Refs.~\cite{kremsky:012341,WB08IEEE}. The state diagram for the
encoder includes an edge from $M$ to $M^{\prime}$ if there exists a $k_{q}%
$-qubit Pauli operator $L_{q}\in\left\{  I,X,Y,Z\right\}  ^{k_{q}}$, a $k_{c}%
$-qubit Pauli operator $L_{c}^{x}\in\left\{  I,X\right\}  ^{k_{q}}$, a $k_{c}%
$-qubit Pauli operator $L_{c}^{z}\in\left\{  I,Z\right\}  ^{k_{q}}$, and an
$a$-qubit Pauli operator $S^{z}\in\left\{  I,Z\right\}  ^{a}$ such that%
\[
\left(  M^{\prime}:P\right)  =(M:L_{q}:S^{z}:L_{c}^{x}+L_{c}^{z}:I_{c})U.
\]
The logical operator $L_{q}$ acts on $k_{q}$\ qubits while the operator
$L_{c}^{x}+L_{c}^{z}$ acts on $k_{c}$ classical bits. We break this latter
operator into two parts for the same reason discussed in the previous section.
The state diagram for an encoder of this form is similar to that for an
EAQ\ convolutional encoder because it includes all transitions for the
classical bits. The difference is in the interpretation of the logical weight
of edge transitions and in the logical label of an edge. If an edge features a
$Z$ operator acting on a classical bit, then this operator does not contribute
to the logical weight of the transition and does not appear on the logical
labels---an $I$ appears on the logical label if $I$ or $Z$ acts on the
classical bit and an $X$ appears on the logical label if an $X$ or $Y$ acts on it.

We have found several examples of recursive, non-catastrophic encoders acting
on these resources. One of our examples acts on five memory qubits, one
logical qubit, one ancilla qubit, one ebit, and one classical bit. Its seed
transformation is as follows:%
\[%
\begin{array}
[c]{c}%
\{162568,174333,260907,261622,168521,3644,\\
101236,204504,227699,260074,138781,9252,\\
257257,179548,78687,6352,230925,180362\}.
\end{array}
\]

\section{The Preservation of Recursiveness and Non-Catastrophicity under
Resource Substitution}

\label{sec:preserve}

The technique used to prove Corollary~\ref{cor:ce-sub-r-nc}\ motivates us to
consider which resource substitutions preserve the properties of recursiveness
and non-catastrophicity. We have two different cases:

\begin{enumerate}
\item A resource substitution that removes edges from the state diagram
preserves non-catastrophicity and recursiveness.

\item A resource substitution that adds edges to the state diagram preserves
catastrophicity and non-recursiveness.
\end{enumerate}

To see the first case for recursiveness preservation, consider a general
encoder acting on logical qubits, ancilla qubits, ebits, cbits, and gauge
qubits. Each logical input $L\equiv\left(  L_{q}:L_{c}^{x}\right)  $ to the
encoder has the following form:%
\[
C\left(  L\right)  \equiv\left\{
\begin{array}
[c]{l}%
P:\\
P=(L_{q}:S^{z}:I_{c}:L_{c}^{x}+C^{z}:G^{x}+G^{z})V,\\
S^{z}\in\left\{  I,Z\right\}  ^{a},\ \ C^{z}\in\left\{  I,Z\right\}  ^{k_{c}%
},\\
G^{x}+G^{z}\in\left\{  I,X,Y,Z\right\}  ^{g}%
\end{array}
\right\}  ,
\]
where the conventions are similar to what we had before. Suppose the encoder
is recursive so that $C\left(  X_{i}:I\right)  $, $C\left(  Y_{i}:I\right)  $,
$C\left(  Z_{i}:I\right)  $, and $C\left(  I:X_{j}\right)  $ all have an
infinite response. Then the encoder is still recursive if we replace an
ancilla with an ebit because the original encoder is recursive and we no
longer have to consider the $Z$ operator acting on the replaced ancilla. The
encoder is still recursive if we replace a cbit with an ancilla qubit because
we no longer have to consider the coset $C\left(  I:X_{j}\right)  $. Finally,
it is still recursive if we replace a gauge qubit with an ancilla qubit
because we no longer have to consider the $X$ operator acting on the replaced
gauge qubit.

To see the first case for non-catastrophicity preservation, suppose that an
encoder is already non-catastrophic. Then a resource substitution that removes
edges from the state diagram preserves non-catastrophicity because this
removal cannot create a zero physical-weight cycle with non-zero logical weight.

To see the second case for non-recursiveness preservation, suppose that an
encoder acting on logical qubits and ebits is non-recursive, meaning that at
least one of the weight-one logical inputs $X_{i}$, $Y_{i}$, or $Z_{i}$ has a
finite response. Then replacing an ebit with an ancilla certainly cannot make
the resulting encoder recursive because we have to consider the response to
the operators $X_{i}$, $Y_{i}$, or $Z_{i}$ and one of these is already finite
from the assumption of non-recursiveness of the original encoder. Furthermore,
consider a non-recursive encoder acting on logical qubits, ebits, and ancilla
qubits. Replacing some of the ancilla qubits with cbits or gauge qubits cannot
make the encoder become recursive for the same reasons.

To see the second for catastrophicity preservation, suppose that an encoder is
catastrophic. Then a resource substitution that adds edges to the state
diagram preserves catastrophicity because any zero physical-weight cycles with
non-zero logical weight are still part of the state diagram for the new encoder.

The following diagram summarizes all of the above observations. The resources
are $L$ for logical qubits, $S$ for ancilla qubits, $E$ for ebits, $C$ for
cbits, and $G$ for gauge qubits. Recursiveness and non-catastrophicity
preservation flow downwards under the displayed resource substitutions, while
non-recursiveness and catastrophicity preservation flow upwards (substitutions
at the same level can go in any order).

\begin{center}%
\begin{tabular}
[c]{ccccccccccc}%
$($ & $L$ & $:$ & $S$ & $:$ & $E$ & $:$ & $C$ & $:$ & $G$ & $)$\\
&  &  & $\downarrow\uparrow$ &  &  &  & $\downarrow\uparrow$ &  &
$\downarrow\uparrow$ & \\
$($ & $L$ & $:$ & $E$ & $:$ & $E$ & $:$ & $S$ & $:$ & $S$ & $)$\\
&  &  &  &  &  &  & $\downarrow\uparrow$ &  & $\downarrow\uparrow$ & \\
$($ & $L$ & $:$ & $E$ & $:$ & $E$ & $:$ & $E$ & $:$ & $E$ & $)$%
\end{tabular}

\end{center}

We can then understand the proof of Corollary~\ref{cor:ce-sub-r-nc} in the
context of the above diagram. The original encoder acts on $\left(
L:S:C:G\right)  $ and is assumed to be recursive. Resource substitution of $C$
and $G$ preserves recursiveness, while making the encoder act on only ancilla
qubits (the auxiliary resource for a standard quantum code). Using the fact
that standard recursive quantum encoders are catastrophic~\cite{PTO09}, we
then back substitute the resources, which is a catastrophicity preserving substitution.

\section{Conclusion and Current Work}

We have constructed a theory of EAQ\ serial turbo coding as an extension of
Poulin \textit{et al}.'s theory in Ref.~\cite{PTO09}. The introduction of
shared entanglement simplifies the theory because an EAQ\ convolutional
encoder can be both recursive and non-catastrophic. These two properties are
essential for quantum serial turbo code families to have minimum distance that
grows near-linearly with the length of the code, while still performing well
under iterative decoding. We provided many examples of EAQ\ convolutional
encoders that satisfy both properties, and we detailed their parameters. We
then showed how the concatenation of these encoders with some of Poulin
\textit{et al}.'s and some of our own lead to EAQ\ serial turbo codes with
near-linear minimum distance scaling. We modified the quantum turbo decoding
algorithm from Ref.~\cite{PTO09} such that it follows the turbo decoding principle
in which the constituent decoders pass along extrinsic information, and this
modification lead to a significant performance improvement over the algorithm
outlined in Ref.~\cite{PTO09}.
We conducted several simulations of EAQ\ turbo
codes---several of our quantum turbo codes were within 1~dB
of their hashing limits and two notable surprises
were that placing ebits in the inner encoder can
achieve a better performance than expected in both scenarios with and without
ebit noise. Our simulations are generally consistent with the findings in
Ref.~\cite{WB09,LB10}\ and other results from quantum Shannon
theory~\cite{PhysRevLett.83.3081,arx2005dev}, namely, that entanglement
assistance can significantly enhance error correction ability. Finally, we
considered how to construct the state diagram for encoders that derive from
other existing extensions to the theory of quantum error correction, and we
showed that classically-enhanced subsystem convolutional encoders cannot
simultaneously be recursive and non-catastrophic.

%(NEED TO MODIFY THIS PARAGRAPH IN LIGHT OF NEW RESULTS)
%(SHOULD CITE EXIT CHART WORK WHERE IT IS RELEVANT.)
There are many questions to ask going forward from here. One could certainly
seek out other entanglement-assisted quantum turbo codes
and conduct numerical simulations of
their performance. One purpose of our numerical
simulations was to illustrate
the effect of adding entanglement assistance
to the encoders of Poulin {\it et al.} \cite{PTO09},
and it was not our intent for them to constitute an exhaustive code comparison.
%The performance of our entanglement-assisted turbo codes was reasonable, but
%it was not as close to the entanglement-assisted hashing bound as one might hope for
%in light of the facts that
%1) the behavior of maximal-entanglement turbo codes
%is nearly identical
%to a classical turbo code and 2) classical turbo codes
%come within 1 dB of the Shannon limit.
It is ongoing work to search for and test many other code combinations,
including cases where the inner and outer encoder
are either recursive, non-recursive, have high / medium / low minimum distance, and have
varying numbers of memory qubits $\leq 4$.
%None of these tests led to performance beyond that
%reported in Figure~\ref{fig:PTO1-performance}(b). 
Also, if one wished to compare directly the performance of entanglement-assisted turbo codes against
the codes in Ref.~\cite{PTO09},
one way to do so might be to look for higher-rate codes operating near the entanglement-assisted
hashing bound that tolerate the same noise level as the codes from Ref.~\cite{PTO09}.
One could also vary the number of ebits and ancilla qubits present in the various code combinations
discussed here.

It would be
interesting to explore the performance of the other suggested code structures
in Section~\ref{sec:CE-EAQ} to determine if they could come close to achieving
the optimal rates from quantum Shannon theory~\cite{HW08a,HW09}. For example,
what is the best arrangement for a classically-enhanced EAQ code? Should we
place the classical bits in the inner or outer encoder? Are there more clever
ways to use entanglement so that we increase error-correcting ability while
reducing entanglement consumption? Consider that Hsieh \textit{et
al}.~recently constructed a class of entanglement-assisted codes that exploit
one ebit and still have good performance~\cite{HYH09}.

We should stress that the behavior of the entanglement-assisted codes using
maximal entanglement is exactly like that of a classical turbo code. There is
no degeneracy, and the iterative decoding algorithm is exactly the same as the
classical one. Furthermore, analyses of classical turbo codes should apply
directly in these cases~\cite{BDMP98,JM02}, and it would be good to determine
the exact correspondence. These analyses studied the bit error rate rather
than the word error rate, so any study of EAQ turbo codes would have to factor
into account this difference.

Much of the classical literature has focused on the choice of a practical
interleaver rather than a random one~\cite{BP94,DD95,YVF99,SSSN01}, and it
might be interesting to import the knowledge discovered here to the quantum case.

Finally, it would be great to find examples of EAQ\ turbo codes with a
positive catalytic rate that outperform the turbo codes in Ref.~\cite{PTO09},
in the sense that they either have a higher catalytic rate while tolerating
the same noise levels or they have the same catalytic rate while tolerating
higher noise levels. This is ongoing work.
%We tried about five different combinations with no
%success for these examples. (MODIFY THIS PARAGRAPH AS WELL.)

We acknowledge David Poulin for providing us with a copy of Ref.~\cite{OPT08}%
\ and for many useful discussions, e-mail interactions, and feedback on the
manuscript. We acknowledge Jean-Pierre Tillich for originally suggesting that
shared entanglement might help in quantum serial turbo codes. We acknowledge
Todd Brun, Hilary Carteret, and Jan Florjanczyk for useful discussions, and
Patrick Hayden for the observation that non-degenerate quantum codes lead to
entanglement-assisted codes with particular error correction power on Bob's
half of the ebits. Zunaira Babar is grateful to
Prof.~Lajos Hanzo and Dr.~Soon Xin Ng (Michael) for their
continuous guidance and support. We acknowledge the computer administrators in the McGill
School of Computer Science for making their computational resources available
for this scientific research. MMW\ acknowledges the warm hospitality of the
ERATO-SORST\ project, the support of the MDEIE (Qu\'{e}bec) PSR-SIIRI
international collaboration grant, and the support of the Centre de Recherches
Math\'ematiques in Montreal.

\appendices

\section{Computing the Distance Spectrum}

\label{sec:distance-spectrum}

There is a straightforward way to compute the distance spectrum of a quantum convolutional
encoder. This technique borrows from similar ideas in the classical theory of
convolutional coding \cite{V71,M98,book1999conv,McE02}. We would like to know
the number of admissible paths with a particular weight beginning and ending
in memory states that are part of a zero physical-weight cycle. For our
example in Section~\ref{sec:examples}, the identity memory state is the only memory state part of a zero
physical-weight cycle. We create a \textit{weight adjacency matrix} whose
entries correspond to edges in the state diagram. This matrix has $x^{w}$ in
entry $\left(  i,j\right)  $ if there is a physical-weight-$w$ edge from
vertex $i$ to vertex $j$ (with the exception of the self-loop at the identity
memory state). The weight adjacency matrix $A$\ for our example is%
\[
A\equiv%
\begin{bmatrix}
0 & x^{2} & x & x\\
x^{2} & x^{2} & x^{2} & x^{2}\\
x^{2} & x & x & x^{2}\\
x^{2} & x & x^{2} & x
\end{bmatrix}
,
\]
where the ordering of vertices is $I$, $X$, $Y$, and $Z$. Note that we place a
zero in the $\left(  1,1\right)  $ entry because we do not want to overcount
the number of admissable paths starting and ending in memory states that are
part of a zero physical-weight cycle. If we would like the number of
admissable paths up to an arbitrary weight that start and end in the identity
memory state, then we compute the $\left(  1,1\right)  $ entry of the
following matrix:%
\[
\left(  I-A\right)  ^{-1}=I+A+A^{2}+A^{3}+\cdots.
\]
The coefficient of $x^{w}$ in the polynomial entry
$\left(  I-A\right)  ^{-1}\left(  1,1\right)  $ is
the number of admissable paths with weight $w$\ starting and ending in the
identity memory state. One can compute this in some cases using Cramer's rule, for example.
We can also approximate the distance spectrum, e.g., by
computing the matrix $B$ where%
\[
B=\sum_{i=1}^{T}A^{i},
\]
and $T$ is some finite positive integer, so that this approximation gives a
truncated distance spectrum. 
Computing the above matrix can be computationally expensive for large $T$, but we can
dramatically reduce the number of computations by truncating the polynomial
entries of $A$ above degree $T$ before performing each multiplication. For our
example in Section~\ref{sec:examples}, the first ten entries of the distance spectrum polynomial are%
\[
2x^{3}+5x^{4}+6x^{5}+23x^{6}+54x^{7}+122x^{8}+298x^{9}+737x^{10},
\]
so that this gives a fairly reasonable approximation to the true distance spectrum.
These coefficients appear in the second column of
Table~\ref{tbl:distance-spec}\ as the first ten values of the distance
spectrum for this first example encoder. Note that there are faster ways of
computing the distance spectrum for classical convolutional codes~\cite{CJ89},
and it remains open to determine how to exploit these techniques for quantum
convolutional encoders.%

\section{Example Encoders}

\label{sec:examples-of-encoders}

%TCIMACRO{\TeXButton{B}{\begin{table*}[tbp] \centering}}%
%BeginExpansion
\begin{table*}[tbp] \centering
%EndExpansion%
\begin{tabular}
[c]{l|l|l|l|l|l|l}\hline\hline
\textbf{Encoder} & \textbf{M} & \textbf{L} & \textbf{A} & \textbf{E} &
\textbf{Seed Transformation} & \textbf{Free Dist.}\\\hline\hline
\multicolumn{1}{c|}{1} & 1 & 1 & 0 & 1 & $\{33,29,30,7,45,47\}$ &
\multicolumn{1}{|c}{3}\\\hline
\multicolumn{1}{c|}{2} & 3 & 2 & 0 & 1 & $\left\{
2188,246,115,2053,1847,833,1658,2571,1566,2783,2990,3229\right\}  $ &
\multicolumn{1}{|c}{4}\\\hline
\multicolumn{1}{c|}{3} & 3 & 3 & 0 & 1 &
$\{12515,8790,10280,11314,6500,14691,1430,7105,8817,1420,10014,7061,10739,8972\}$
& \multicolumn{1}{|c}{4}\\\hline
\multicolumn{1}{c|}{4} & 3 & 4 & 0 & 1 & $%
\begin{array}
[c]{c}%
\{23233,28350,13963,43904,58908,19553,6318,63573,\\
12838,7558,22611,27045,48320,9596,48500,54018\}
\end{array}
$ & \multicolumn{1}{|c}{3}\\\hline
\multicolumn{1}{c|}{5} & 2 & 1 & 1 & 1 &
$\{159,1006,727,641,925,522,726,314,793,648,119,210\}$ &
\multicolumn{1}{|c}{4}\\\hline
\multicolumn{1}{c|}{6} & 2 & 1 & 1 & 2 &
$\{1116,1363,1495,1326,241,2411,2268,1480,2032,1589,810,3351\}$ &
\multicolumn{1}{|c}{5}\\\hline
\multicolumn{1}{c|}{7} & 2 & 2 & 1 & 1 &
$\{141,509,3495,2470,2702,3576,1522,905,2622,1598,642,773\}$ &
\multicolumn{1}{|c}{3}\\\hline
\multicolumn{1}{c|}{8} & 2 & 6 & 0 & 1 & $%
\begin{array}
[c]{c}%
\{113633,199924,181760,243189,25748,110950,158559,282,205474,\\
193680,199692,252779,245067,64266,147306,152171,230343,75396\}
\end{array}
$ & \multicolumn{1}{|c}{2}\\\hline
\multicolumn{1}{c|}{9} & 2 & 8 & 0 & 1 & $%
\begin{array}
[c]{c}%
\{2432999,1503627,1816960,1050871,1297694,3894582,410463,2344289,1908709,\\
3176421,3668357,1860207,1511167,3829280,3008050,2896381,999389,374648,\\
4000734,885953,2452389,3608225\}
\end{array}
$ & \multicolumn{1}{|c}{2}\\\hline
\multicolumn{1}{c|}{10} & 2 & 9 & 0 & 1 & $%
\begin{array}
[c]{c}%
\{4943947,12156608,10237254,2501342,2665695,7306816,8727132,80870,13726997,\\
16078090,11897398,9857749,16524053,972786,5098459,8962232,10325041,\\
12705543,8324846,13241728,11521711,7907747,16588769,5842661\}
\end{array}
$ & \multicolumn{1}{|c}{N/A}\\\hline\hline
\end{tabular}
\caption{Specification of our example EAQ convolutional encoders that are recursive and non-catastrophic.
The first column indexes the different encoders, and all of the other columns
correspond to a particular encoder.
Column ``M'' gives the number of memory qubits, column ``L'' gives
the number of information qubits, column ``A'' gives the number
of ancilla qubits, and column ``E'' gives the number of ebits.
The column labeled ``Seed Transformation'' specifies the seed transformation for each encoder,
using a decimal representation. The convention for qubit ordering is that an encoder acts on memory qubits,
information qubits, ancilla qubits, and ebits to produce output memory qubits
and physical or channel qubits. The convention for the bit representation of a tensor product of Pauli
operators is $ZXYZ$ maps to $[1011|0110]$, and the decimal representation of this bit vector is 182
(see the discussion around (\ref{eq:example-encoder}) in the text).
The last column gives the free distance of the encoder.}\label{tbl:example-encoders}%
%TCIMACRO{\TeXButton{E}{\end{table*}}}%
%BeginExpansion
\end{table*}%
%EndExpansion
Table~\ref{tbl:example-encoders} lists the specifications of many other
examples of EAQ encoders that are both recursive and non-catastrophic---a
computer program helped check that these properties hold for each of the
examples~\cite{W10}. Included in the list of example encoders are some which
act on ancilla qubits in addition to ebits. These examples demonstrate that we
do not necessarily require the auxiliary resource of an EAQ\ convolutional
encoder to be ebits alone in order for the encoder to possess both properties.
Table~\ref{tbl:distance-spec} gives a truncated distance spectrum for each of these encoders.%

%TCIMACRO{\TeXButton{B}{\begin{table}[tbp] \centering}}%
%BeginExpansion
\begin{table}[tbp] \centering
%EndExpansion%
\begin{tabular}
[c]{l|c|c|c|c|c|c|c}\hline\hline
$w$ & \textbf{1} & \textbf{2} & \textbf{3} & \textbf{4} & \textbf{5} &
\textbf{6} & \textbf{7}\\\hline\hline
\multicolumn{1}{c|}{0} & \multicolumn{1}{|r|}{0} & \multicolumn{1}{|r|}{0} &
\multicolumn{1}{|r|}{0} & \multicolumn{1}{|r|}{0} & \multicolumn{1}{|r|}{0} &
\multicolumn{1}{|r|}{0} & \multicolumn{1}{|r}{0}\\
\multicolumn{1}{c|}{1} & \multicolumn{1}{|r|}{0} & \multicolumn{1}{|r|}{0} &
\multicolumn{1}{|r|}{0} & \multicolumn{1}{|r|}{0} & \multicolumn{1}{|r|}{0} &
\multicolumn{1}{|r|}{0} & \multicolumn{1}{|r}{0}\\
\multicolumn{1}{c|}{2} & \multicolumn{1}{|r|}{0} & \multicolumn{1}{|r|}{0} &
\multicolumn{1}{|r|}{0} & \multicolumn{1}{|r|}{0} & \multicolumn{1}{|r|}{0} &
\multicolumn{1}{|r|}{0} & \multicolumn{1}{|r}{0}\\
\multicolumn{1}{c|}{3} & \multicolumn{1}{|r|}{2} & \multicolumn{1}{|r|}{0} &
\multicolumn{1}{|r|}{0} & \multicolumn{1}{|r|}{3} & \multicolumn{1}{|r|}{0} &
\multicolumn{1}{|r|}{0} & \multicolumn{1}{|r}{3}\\
\multicolumn{1}{c|}{4} & \multicolumn{1}{|r|}{5} & \multicolumn{1}{|r|}{1} &
\multicolumn{1}{|r|}{8} & \multicolumn{1}{|r|}{32} & \multicolumn{1}{|r|}{3} &
\multicolumn{1}{|r|}{0} & \multicolumn{1}{|r}{22}\\
\multicolumn{1}{c|}{5} & \multicolumn{1}{|r|}{6} & \multicolumn{1}{|r|}{6} &
\multicolumn{1}{|r|}{69} & \multicolumn{1}{|r|}{292} & \multicolumn{1}{|r|}{3}
& \multicolumn{1}{|r|}{1} & \multicolumn{1}{|r}{73}\\
\multicolumn{1}{c|}{6} & \multicolumn{1}{|r|}{23} & \multicolumn{1}{|r|}{49} &
\multicolumn{1}{|r|}{463} & \multicolumn{1}{|r|}{2,622} &
\multicolumn{1}{|r|}{23} & \multicolumn{1}{|r|}{1} & \multicolumn{1}{|r}{286}%
\\
\multicolumn{1}{c|}{7} & \multicolumn{1}{|r|}{54} & \multicolumn{1}{|r|}{218}
& \multicolumn{1}{|r|}{3,478} & \multicolumn{1}{|r|}{24,848} &
\multicolumn{1}{|r|}{41} & \multicolumn{1}{|r|}{1} &
\multicolumn{1}{|r}{1,309}\\
\multicolumn{1}{c|}{8} & \multicolumn{1}{|r|}{122} &
\multicolumn{1}{|r|}{1,077} & \multicolumn{1}{|r|}{25,057} &
\multicolumn{1}{|r|}{227,262} & \multicolumn{1}{|r|}{127} &
\multicolumn{1}{|r|}{3} & \multicolumn{1}{|r}{5,696}\\
\multicolumn{1}{c|}{9} & \multicolumn{1}{|r|}{298} &
\multicolumn{1}{|r|}{5,477} & \multicolumn{1}{|r|}{181,959} &
\multicolumn{1}{|r|}{2.1$\cdot10^{6}$} & \multicolumn{1}{|r|}{325} &
\multicolumn{1}{|r|}{11} & \multicolumn{1}{|r}{23,975}\\
\multicolumn{1}{c|}{10} & \multicolumn{1}{|r|}{737} &
\multicolumn{1}{|r|}{27,428} & \multicolumn{1}{|r|}{1,326,070} &
\multicolumn{1}{|r|}{1.9$\cdot10^{7}$} & \multicolumn{1}{|r|}{1,061} &
\multicolumn{1}{|r|}{17} & \multicolumn{1}{|r}{102,132}\\\hline\hline
\end{tabular}
\caption{A truncated distance spectrum for the first seven of our example EAQ
convolutional encoders. The first column on the left
gives the first 11 values of the weight $w$, and the other columns give the first 11 values of
the truncated distance
spectrum $F(w)$ of the first seven of our EAQ convolutional encoders from Table~\ref{tbl:example-encoders}.}\label{tbl:distance-spec}%
%TCIMACRO{\TeXButton{E}{\end{table}}}%
%BeginExpansion
\end{table}%
%EndExpansion

We can construct EAQ\ serial turbo codes, by serially concatenating some of
our example \textquotedblleft WH\ encoders\textquotedblright\ in
Table~\ref{tbl:example-encoders} with the \textquotedblleft
PTO\ encoders\textquotedblright\ in Table~1 of Ref.~\cite{PTO09} (ordered from
left to right).\ Table~\ref{tbl:turbo-combos} details these different
combinations, giving their rates and average minimum distance scaling.%
%TCIMACRO{\TeXButton{B}{\begin{table}[tbp] \centering}}%
%BeginExpansion
\begin{table}[tbp] \centering
%EndExpansion%
\begin{tabular}
[c]{c|c|c|c|c}\hline\hline
\textbf{Outer} & \textbf{Inner} & \textbf{Q} & \textbf{E} & \textbf{Min.
Dist.}\\\hline\hline
PTO1 & WH3 & 1/4 & 1/4 & $N^{2/3}$\\
PTO2 & WH3 & 1/4 & 1/4 & $N^{5/7}$\\
PTO3 & WH2 & 1/3 & 1/3 & $N^{3/4}$\\
PTO3 & WH7 & 1/4 & 1/4 & $N^{3/4}$\\
PTO3 & WH4 & 2/5 & 1/5 & $N^{3/4}$\\
PTO2 & WH8 & 2/7 & 1/7 & $N^{5/7}$\\
PTO3 & WH8 & 3/7 & 1/7 & $N^{3/4}$\\
PTO3 & WH9 & 4/9 & 1/9 & $N^{3/4}$\\
PTO2 & WH10 & 3/10 & 1/10 & $N^{5/7}$\\
WH11 & WH3 & 1/2 & 1/4 & $N^{1/3}$\\\hline\hline
\end{tabular}
\caption{Various combinations of encoders for quantum turbo codes and their minimum distance scaling.}\label{tbl:turbo-combos}%
%TCIMACRO{\TeXButton{E}{\end{table}}}%
%BeginExpansion
\end{table}%
%EndExpansion

The columns of Table~\ref{tbl:turbo-combos} give the outer encoder, the inner
encoder, the quantum communication rate, the entanglement consumption rate,
and the average minimum distance growth of a particular EAQ\ turbo code. The
first four combinations all have a good average minimum distance scaling, but
the catalytic rate\footnote{The catalytic rate is the difference between the
quantum communication rate and the entanglement consumption rate~\cite{DBH09}%
.} for each of them is zero. The last six combinations both have a good
average minimum distance scaling and a positive catalytic rate.

\bibliographystyle{IEEEtran}
\bibliography{Ref}

% Generated by IEEEtran.bst, version: 1.13 (2008/09/30)
\begin{thebibliography}{10}
\providecommand{\url}[1]{#1}
\csname url@samestyle\endcsname
\providecommand{\newblock}{\relax}
\providecommand{\bibinfo}[2]{#2}
\providecommand{\BIBentrySTDinterwordspacing}{\spaceskip=0pt\relax}
\providecommand{\BIBentryALTinterwordstretchfactor}{4}
\providecommand{\BIBentryALTinterwordspacing}{\spaceskip=\fontdimen2\font plus
\BIBentryALTinterwordstretchfactor\fontdimen3\font minus
  \fontdimen4\font\relax}
\providecommand{\BIBforeignlanguage}[2]{{%
\expandafter\ifx\csname l@#1\endcsname\relax
\typeout{** WARNING: IEEEtran.bst: No hyphenation pattern has been}%
\typeout{** loaded for the language `#1'. Using the pattern for}%
\typeout{** the default language instead.}%
\else
\language=\csname l@#1\endcsname
\fi
#2}}
\providecommand{\BIBdecl}{\relax}
\BIBdecl

\bibitem{BGT93}
C.~Berrou, A.~Glavieux, and P.~Thitimajshima, ``Near {Shannon} limit
  error-correcting coding and decoding: Turbo-codes,'' in \emph{Technical
  Program of the IEEE International Conference on Communications}, vol.~2,
  Geneva, Switzerland, May 1993, pp. 1064--1070.

\bibitem{BM96}
S.~Benedetto and G.~Montorsi, ``Unveiling turbo codes: Some results on parallel
  concatenated coding schemes,'' \emph{IEEE Transactions on Information
  Theory}, vol.~42, no.~2, pp. 409--428, March 1996.

\bibitem{BG96}
C.~Berrou and A.~Glavieux, ``Near optimum error correcting coding and decoding:
  Turbo-codes,'' \emph{IEEE Transactions on Communications}, vol.~44, no.~10,
  pp. 1261--1271, October 1996.

\bibitem{BDMP98}
S.~Benedetto, D.~Divsalar, G.~Montorsi, and F.~Pollara, ``Serial concatenation
  of interleaved codes: performance analysis, design, and iterative decoding,''
  \emph{IEEE Transactions on Information Theory}, vol.~44, no.~3, pp. 909--926,
  May 1998.

\bibitem{KU98}
\BIBentryALTinterwordspacing
N.~Kahale and R.~Urbanke, ``On the minimum distance of parallel and serially
  concatenated codes,'' in \emph{Proceedings of the International Symposium on
  Information Theory}, Cambridge, Massachussetts, USA, August 1998, p.~31.
  [Online]. Available:
  \url{http://lthcwww.epfl.ch/{\textasciitilde}ruediger/papers/weight.ps}
\BIBentrySTDinterwordspacing

\bibitem{PTO09}
D.~Poulin, J.-P. Tillich, and H.~Ollivier, ``Quantum serial turbo-codes,''
  \emph{IEEE Transactions on Information Theory}, vol.~55, no.~6, pp.
  2776--2798, June 2009.

\bibitem{PhysRevA.55.1613}
S.~Lloyd, ``Capacity of the noisy quantum channel,'' \emph{Physical Review A},
  vol.~55, no.~3, pp. 1613--1622, March 1997.

\bibitem{capacity2002shor}
P.~W. Shor, ``The quantum channel capacity and coherent information,'' in
  \emph{Lecture Notes, MSRI Workshop on Quantum Computation}, 2002.

\bibitem{ieee2005dev}
I.~Devetak, ``The private classical capacity and quantum capacity of a quantum
  channel,'' \emph{IEEE Transactions on Information Theory}, vol.~51, pp.
  44--55, January 2005.

\bibitem{qcap2008first}
P.~Hayden, M.~Horodecki, A.~Winter, and J.~Yard, ``A decoupling approach to the
  quantum capacity,'' \emph{Open Systems \& Information Dynamics}, vol.~15, pp.
  7--19, March 2008.

\bibitem{PhysRevLett.91.177902}
H.~Ollivier and J.-P. Tillich, ``Description of a quantum convolutional code,''
  \emph{Physical Review Letters}, vol.~91, no.~17, p. 177902, October 2003.

\bibitem{ieee2007forney}
G.~D. Forney, M.~Grassl, and S.~Guha, ``Convolutional and tail-biting quantum
  error-correcting codes,'' \emph{IEEE Transactions on Information Theory},
  vol.~53, pp. 865--880, 2007.

\bibitem{V71}
A.~J. Viterbi, ``Convolutional codes and their performance in communication
  systems,'' \emph{IEEE Transactions on Communication Technology}, vol.~19,
  no.~5, pp. 751--772, October 1971.

\bibitem{VVS97}
A.~J. Viterbi, A.~M. Viterbi, and N.~T. Sindhushayana, ``Interleaved
  concatenated codes: New perspectives on approaching the shannon limit,''
  \emph{Proceedings of the National Academy of Sciences of the United States of
  America}, vol.~94, pp. 9525--9531, September 1997.

\bibitem{P09}
\BIBentryALTinterwordspacing
D.~Poulin, ``Iterative quantum coding schemes: {LDPC} and turbo codes,'' Online
  Presentation, April 2009, slide 92. [Online]. Available:
  \url{http://www.physique.usherbrooke.ca/{\textasciitilde}dpoulin/Documents/IDQC09{\textunderscore}McGill.pdf}
\BIBentrySTDinterwordspacing

\bibitem{OPT08}
H.~Ollivier, D.~Poulin, and J.-P. Tillich, ``Quantum turbo codes,'' October
  2008, unpublished manuscript.

\bibitem{HW12}
M.~Houshmand and M.~M. Wilde, ``Recursive quantum convolutional encoders are
  catastrophic: A simple proof,'' September 2012, arXiv:1209.0082.

\bibitem{T09}
\BIBentryALTinterwordspacing
J.-P. Tillich, ``Quantum codes suitable for iterative decoding,'' Online
  presentation, May 2009. [Online]. Available:
  \url{http://www.infres.enst.fr/{\textasciitilde}markham/QuPa/28May/exposeJPTillich.pdf}
\BIBentrySTDinterwordspacing

\bibitem{BDH06}
T.~A. Brun, I.~Devetak, and M.-H. Hsieh, ``Correcting quantum errors with
  entanglement,'' \emph{Science}, vol. 314, no. 5798, pp. 436--439, October
  2006.

\bibitem{DBH09}
I.~Devetak, T.~A. Brun, and M.-H. Hsieh, \emph{New Trends in Mathematical
  Physics}.\hskip 1em plus 0.5em minus 0.4em\relax Springer Netherlands, 2009,
  ch. Entanglement-Assisted Quantum Error-Correcting Codes, pp. 161--172.

\bibitem{PhysRevLett.83.3081}
C.~H. Bennett, P.~W. Shor, J.~A. Smolin, and A.~V. Thapliyal,
  ``Entanglement-assisted classical capacity of noisy quantum channels,''
  \emph{Physical Review Letters}, vol.~83, no.~15, pp. 3081--3084, October
  1999.

\bibitem{ieee2002bennett}
------, ``Entanglement-assisted capacity of a quantum channel and the reverse
  shannon theorem,'' \emph{IEEE Transactions on Information Theory}, vol.~48,
  pp. 2637--2655, 2002.

\bibitem{WB07}
M.~M. Wilde and T.~A. Brun, ``Entanglement-assisted quantum convolutional
  coding,'' \emph{Physical Review A}, vol.~81, no.~4, p. 042333, April 2010.

\bibitem{WB08}
------, ``Quantum convolutional coding with shared entanglement: General
  structure,'' \emph{Quantum Information Processing}, vol.~9, no.~5, pp.
  509--540, October 2010, arXiv:0807.3803.

\bibitem{bell1948shannon}
C.~E. Shannon, ``A mathematical theory of communication,'' \emph{Bell System
  Technical Journal}, vol.~27, pp. 379--423, 1948.

\bibitem{DHL10}
F.~Dupuis, P.~Hayden, and K.~Li, ``A father protocol for quantum broadcast
  channels,'' \emph{IEEE Transactions on Information Theory}, vol.~56, no.~6,
  pp. 2946--2956, June 2010.

\bibitem{PhysRevA.54.1098}
A.~R. Calderbank and P.~W. Shor, ``Good quantum error-correcting codes exist,''
  \emph{Physical Review A}, vol.~54, no.~2, pp. 1098--1105, August 1996.

\bibitem{PhysRevLett.77.793}
A.~M. Steane, ``Error correcting codes in quantum theory,'' \emph{Physical
  Review Letters}, vol.~77, no.~5, pp. 793--797, July 1996.

\bibitem{thesis97gottesman}
D.~Gottesman, ``Stabilizer codes and quantum error correction,'' Ph.D.
  dissertation, California Institute of Technology, 1997.

\bibitem{grassl2006itw}
M.~Grassl, ``Convolutional and block quantum error-correcting codes,'' in
  \emph{IEEE Information Theory Workshop}, Chengdu, October 2006, pp. 144--148.

\bibitem{HHW12}
M.~Houshmand, S.~Hosseini-Khayat, and M.~M. Wilde, ``Minimal-memory,
  non-catastrophic, polynomial-depth quantum convolutional encoders,''
  \emph{Accepted for publication in IEEE Transactions on Information Theory},
  2012, arXiv:1105.0649.

\bibitem{SS96}
P.~W. Shor and J.~Smolin, ``Quantum error-correcting codes need not completely
  reveal the error syndrome,'' April 1996, arXiv:quant-ph/9604006.

\bibitem{PhysRevA.57.830}
D.~P. DiVincenzo, P.~W. Shor, and J.~A. Smolin, ``Quantum-channel capacity of
  very noisy channels,'' \emph{Physical Review A}, vol.~57, no.~2, pp.
  830--839, February 1998.

\bibitem{arxiv2007brun}
M.-H. Hsieh, I.~Devetak, and T.~A. Brun, ``General entanglement-assisted
  quantum error-correcting codes,'' \emph{Physical Review A}, vol.~76, p.
  062313, 2007.

\bibitem{arx2008wildeOEA}
M.~M. Wilde and T.~A. Brun, ``Optimal entanglement formulas for
  entanglement-assisted quantum coding,'' \emph{Physical Review A}, vol.~77, p.
  064302, 2008.

\bibitem{PhysRevLett.69.2881}
C.~H. Bennett and S.~J. Wiesner, ``Communication via one- and two-particle
  operators on {Einstein-Podolsky-Rosen} states,'' \emph{Physical Review
  Letters}, vol.~69, no.~20, pp. 2881--2884, November 1992.

\bibitem{LB10}
C.-Y. Lai and T.~Brun, ``Entanglement increases the error-correcting ability of
  quantum error-correcting codes,'' August 2010, arXiv:1008.2598.

\bibitem{PhysRevA.66.052313}
G.~Bowen, ``Entanglement required in achieving entanglement-assisted channel
  capacities,'' \emph{Physical Review A}, vol.~66, no.~5, p. 052313, November
  2002.

\bibitem{WB09}
M.~M. Wilde and T.~A. Brun, ``Extra shared entanglement reduces memory demand
  in quantum convolutional coding,'' \emph{Physical Review A}, vol.~79, no.~3,
  p. 032313, March 2009.

\bibitem{GR06b}
M.~Grassl and M.~R\"{o}tteler, ``Noncatastrophic encoders and encoder inverses
  for quantum convolutional codes,'' in \emph{Proceedings of the IEEE
  International Symposium on Information Theory}, Seattle, Washington, USA,
  July 2006, pp. 1109--1113, arXiv:quant-ph/0602129.

\bibitem{W09}
M.~M. Wilde, ``Quantum-shift-register circuits,'' \emph{Physical Review A},
  vol.~79, no.~6, p. 062325, June 2009.

\bibitem{HHW10}
M.~Houshmand, S.~Hosseini-Khayat, and M.~M. Wilde, ``Minimal memory
  requirements for pearl necklace encoders of quantum convolutional codes,''
  \emph{IEEE Transactions on Computers}, vol.~61, no.~3, pp. 299--312, March
  2012, arXiv:1004.5179.

\bibitem{HH10}
M.~Houshmand and S.~Hosseini-Khayat, ``Minimal-memory realization of
  pearl-necklace encoders of general quantum convolutional codes,''
  \emph{Physical Review A}, vol.~83, p. 022308, February 2011, arXiv:1009.2242.

\bibitem{PhysRevLett.78.1600}
P.~Shor and R.~Laflamme, ``Quantum analog of the {MacWilliams} identities for
  classical coding theory,'' \emph{Physical Review Letters}, vol.~78, no.~8,
  pp. 1600--1602, February 1997.

\bibitem{PhysRevA.74.052333}
D.~Poulin, ``Optimal and efficient decoding of concatenated quantum block
  codes,'' \emph{Physical Review A}, vol.~74, no.~5, p. 052333, November 2006.

\bibitem{M98}
R.~J. McEliece, \emph{Communications and Coding (P. G. Farrell 60th birthday
  celebration)}.\hskip 1em plus 0.5em minus 0.4em\relax New York: John Wiley \&
  Sons, 1998, ch. How to Compute Weight Enumerators for Convolutional Codes,
  pp. 121--141.

\bibitem{book1999conv}
R.~Johannesson and K.~S. Zigangirov, \emph{Fundamentals of Convolutional
  Coding}.\hskip 1em plus 0.5em minus 0.4em\relax Wiley-IEEE Press, 1999.

\bibitem{McE02}
R.~J. McEliece, \emph{The Theory of Information and Coding}.\hskip 1em plus
  0.5em minus 0.4em\relax Cambridge University Press, 2002.

\bibitem{DLT02}
D.~P. DiVincenzo, D.~W. Leung, and B.~M. Terhal, ``Quantum data hiding,''
  \emph{IEEE Transactions on Information Theory}, vol.~48, no.~3, pp. 580--598,
  March 2002.

\bibitem{tc_book_hanzo}
L.~Hanzo, T.~H. Liew, B.~L. Yeap, R.~Y.~S. Tee, and S.~X. Ng, \emph{Turbo
  Coding, Turbo Equalisation and Space-Time Coding: EXIT-Chart-Aided
  Near-Capacity Designs for Wireless Channels, 2nd Edition}.\hskip 1em plus
  0.5em minus 0.4em\relax {New York, USA}: John Wiley IEEE Press, March 2011.

\bibitem{SISO97}
S.~Benedetto, D.~Divsalar, G.~Montorsi, and F.~Pollara, ``A soft-input
  soft-output {APP} module for iterative decoding of concatenated codes,''
  \emph{IEEE Communications Letters}, vol.~1, no.~1, pp. 22--24, 1997.

\bibitem{vlc_book_hanzo}
L.~Hanzo, R.~G. Maunder, J.~Wang, and L.~Yang, \emph{Near-Capacity
  Variable-Length Coding:Regular and EXIT-Chart-Aided Irregular Designs}.\hskip
  1em plus 0.5em minus 0.4em\relax John Wiley IEEE Press, 2011.

\bibitem{BNH13}
Z.~Babar, S.~X. Ng, and L.~Hanzo, ``Convergence analysis of quantum turbo codes
  using {EXIT} charts,'' 2013, unpublished manuscript.

\bibitem{PhysRevA.54.3824}
C.~H. Bennett, D.~P. DiVincenzo, J.~A. Smolin, and W.~K. Wootters,
  ``Mixed-state entanglement and quantum error correction,'' \emph{Physical
  Review A}, vol.~54, no.~5, pp. 3824--3851, November 1996.

\bibitem{PhysRevLett.98.030501}
G.~Smith and J.~A. Smolin, ``Degenerate quantum codes for {Pauli} channels,''
  \emph{Physical Review Letters}, vol.~98, no.~3, p. 030501, January 2007.

\bibitem{PhysRevA.78.062335}
J.~Fern and K.~B. Whaley, ``Lower bounds on the nonzero capacity of {Pauli}
  channels,'' \emph{Physical Review A}, vol.~78, no.~6, p. 062335, December
  2008.

\bibitem{PhysRevA.77.010301}
J.~Fern, ``Correctable noise of quantum-error-correcting codes under adaptive
  concatenation,'' \emph{Physical Review A}, vol.~77, no.~1, p. 010301, January
  2008.

\bibitem{PhysRevLett.93.230504}
I.~Devetak, A.~W. Harrow, and A.~Winter, ``A family of quantum protocols,''
  \emph{Physical Review Letters}, vol.~93, no.~23, p. 230504, December 2004.

\bibitem{arx2005dev}
------, ``A resource framework for quantum {Shannon} theory,'' \emph{IEEE
  Transactions on Information Theory}, vol.~54, no.~10, pp. 4587--4618, October
  2008.

\bibitem{arx2006anura}
A.~Abeyesinghe, I.~Devetak, P.~Hayden, and A.~Winter, ``{The mother of all
  protocols: restructuring quantum information's family tree},''
  \emph{Proceedings of the Royal Society A}, vol. 465, no. 2108, pp.
  2537--2563, 2009.

\bibitem{HW08a}
M.-H. Hsieh and M.~M. Wilde, ``Entanglement-assisted communication of classical
  and quantum information,'' \emph{IEEE Transactions on Information Theory},
  vol.~56, no.~9, pp. 4682--4704, September 2010, arXiv:0811.4227.

\bibitem{HW09}
------, ``Trading classical communication, quantum communication, and
  entanglement in quantum {Shannon} theory,'' \emph{IEEE Transactions on
  Information Theory}, vol.~56, no.~9, pp. 4705--4730, September 2010,
  arXiv:0901.3038.

\bibitem{WH10a}
M.~M. Wilde and M.-H. Hsieh, ``The quantum dynamic capacity formula of a
  quantum channel,'' \emph{Quantum Information Processing}, vol.~11, no.~6, pp.
  1431--1463, 2012, arXiv:1004.0458.

\bibitem{LBW10}
C.-Y. Lai, T.~A. Brun, and M.~M. Wilde, ``Dualities and identities for
  entanglement-assisted quantum codes,'' October 2010, arXiv:1010.5506.

\bibitem{W10}
M.~M. Wilde, ``{EA-Turbo},'' http://code.google.com/p/ea-turbo/, September
  2010, {Matlab and MEX} software for characterizing and simulating
  entanglement-assisted quantum turbo codes (source code available under a
  {GPL} license).

\bibitem{MMM04}
D.~J. MacKay, G.~Mitchison, and P.~L. McFadden, ``Sparse graph codes for
  quantum error-correction,'' \emph{IEEE Transactions on Information Theory},
  vol.~50, no.~10, p. 2315, October 2004.

\bibitem{HI07}
M.~Hagiwara and H.~Imai, ``Quantum quasi-cyclic {LDPC} codes,'' in
  \emph{Proceedings of the 2007 IEEE International Symposium on Information
  Theory}, Nice, France, June 2007, pp. 806--810, arXiv:quant-ph/0701020.

\bibitem{COT07}
T.~Camara, H.~Ollivier, and J.-P. Tillich, ``A class of quantum {LDPC} codes:
  construction and performances under iterative decoding,'' in
  \emph{Proceedings of the 2007 International Symposium on Information Theory},
  Nice, France, June 2007, pp. 811--815.

\bibitem{HBD09}
M.-H. Hsieh, T.~A. Brun, and I.~Devetak, ``Entanglement-assisted quantum
  quasicyclic low-density parity-check codes,'' \emph{Physical Review A},
  vol.~79, no.~3, p. 032340, March 2009.

\bibitem{TL10}
P.~Tan and J.~Li, ``Efficient quantum stabilizer codes: {LDPC} and
  {LDPC}-convolutional constructions,'' \emph{IEEE Transactions on Information
  Theory}, vol.~56, no.~1, pp. 476--491, January 2010.

\bibitem{KHIS10}
K.~Kasai, M.~Hagiwara, H.~Imai, and K.~Sakaniwa, ``Quantum error correction
  beyond the bounded distance decoding limit,'' \emph{IEEE Transactions on
  Information Theory}, vol.~58, no.~2, pp. 1223--1230, February 2012,
  arXiv:1007.1778.

\bibitem{FCVBT10}
Y.~Fujiwara, D.~Clark, P.~Vandendriessche, M.~De~Boeck, and V.~D. Tonchev,
  ``Entanglement-assisted quantum low-density parity-check codes,''
  \emph{Physical Review A}, vol.~82, p. 042338, October 2010, arXiv:1008.4747.

\bibitem{prep2007shaw}
B.~Shaw, M.~M. Wilde, O.~Oreshkov, I.~Kremsky, and D.~Lidar, ``Encoding one
  logical qubit into six physical qubits,'' \emph{Physical Review A}, vol.~78,
  p. 012337, 2008.

\bibitem{WF09}
M.~M. Wilde and D.~Fattal, ``Nonlocal quantum information in bipartite quantum
  error correction,'' \emph{Quantum Information Processing}, vol.~9, no.~5, pp.
  591--610, September 2009.

\bibitem{WH10}
M.~M. Wilde and M.-H. Hsieh, ``Entanglement generation with a quantum channel
  and a shared state,'' in \emph{Proceedings of the 2010 IEEE International
  Symposium on Information Theory}, Austin, Texas, USA, June 2010, pp.
  2713--2717.

\bibitem{PhysRevA.79.042342}
Y.~Dong, X.~Deng, M.~Jiang, Q.~Chen, and S.~Yu, ``Entanglement-enhanced quantum
  error-correcting codes,'' \emph{Physical Review A}, vol.~79, no.~4, p.
  042342, April 2009.

\bibitem{P99}
\BIBentryALTinterwordspacing
J.~Preskill, \emph{Lecture Notes on Quantum Computation}, 1999, ch. Quantum
  Error Correction (Chapter 7), pp. 15--16. [Online]. Available:
  \url{http://www.theory.caltech.edu/people/preskill/ph229/notes/chap7.pdf}
\BIBentrySTDinterwordspacing

\bibitem{PhysRevA.86.032319}
C.-Y. Lai and T.~A. Brun, ``Entanglement-assisted quantum error-correcting
  codes with imperfect ebits,'' \emph{Physical Review A}, vol.~86, p. 032319,
  September 2012, arXiv:1204.0302.

\bibitem{kribs:180501}
D.~Kribs, R.~Laflamme, and D.~Poulin, ``Unified and generalized approach to
  quantum error correction,'' \emph{Physical Review Letters}, vol.~94, no.~18,
  p. 180501, 2005.

\bibitem{qic2006kribs}
D.~W. Kribs, R.~Laflamme, D.~Poulin, and M.~Lesosky, ``Operator quantum error
  correction,'' \emph{Quantum Information \& Computation}, vol.~6, pp.
  383--399, 2006.

\bibitem{poulin:230504}
D.~Poulin, ``Stabilizer formalism for operator quantum error correction,''
  \emph{Physical Review Letters}, vol.~95, no.~23, p. 230504, 2005.

\bibitem{aliferis:220502}
P.~Aliferis and A.~W. Cross, ``Subsystem fault tolerance with the {Bacon-Shor}
  code,'' \emph{Physical Review Letters}, vol.~98, no.~22, p. 220502, 2007.

\bibitem{kremsky:012341}
I.~Kremsky, M.-H. Hsieh, and T.~A. Brun, ``Classical enhancement of
  quantum-error-correcting codes,'' \emph{Physical Review A}, vol.~78, no.~1,
  p. 012341, 2008.

\bibitem{DS05}
I.~Devetak and P.~W. Shor, ``The capacity of a quantum channel for simultaneous
  transmission of classical and quantum information,'' \emph{Communications in
  Mathematical Physics}, vol. 256, pp. 287--303, 2005.

\bibitem{WB08IEEE}
M.~M. Wilde and T.~A. Brun, ``Unified quantum convolutional coding,'' in
  \emph{Proceedings of the IEEE International Symposium on Information Theory},
  Toronto, Ontario, Canada, July 2008, pp. 359--363, arXiv:0801.0821.

\bibitem{HYH09}
M.-H. Hsieh, W.-T. Yen, and L.-Y. Hsu, ``High performance entanglement-assisted
  quantum ldpc codes need little entanglement,'' \emph{IEEE Transactions on
  Information Theory}, vol.~57, no.~3, pp. 1761--1769, March 2011,
  arXiv:0906.5532.

\bibitem{JM02}
H.~Jin and R.~J. McEliece, ``Coding theorems for turbo code ensembles,''
  \emph{IEEE Transactions on Information Theory}, vol.~48, no.~6, pp.
  1451--1461, June 2002.

\bibitem{BP94}
A.~S. Barbulescu and S.~S. Pietrobon, ``Interleaver design for turbo codes,''
  \emph{Electronics Letters}, vol.~30, no.~25, pp. 2107--2108, December 1994.

\bibitem{DD95}
S.~Dolinar and D.~Divsalar, ``Weight distribution for turbo codes using random
  and nonrandom permutations,'' \emph{JPL Progress report}, vol.~42, no. 122,
  pp. 56--65, August 1995.

\bibitem{YVF99}
J.~Yuan, B.~Vucetic, and W.~Feng, ``Combined turbo codes and interleaver
  design,'' \emph{IEEE Transactions on Communications}, vol.~47, no.~4, pp.
  484--487, April 1999.

\bibitem{SSSN01}
H.~R. Sadjadpour, N.~J.~A. Sloane, M.~Salehi, and G.~Nebe, ``Interleaver design
  for turbo codes,'' \emph{IEEE Journal on Selected Areas in Communications},
  vol.~19, no.~5, pp. 831--837, May 2001.

\bibitem{CJ89}
M.~L. Cedervall and R.~Johannesson, ``A fast algorithm for computing distance
  spectrum of convolutional codes,'' \emph{IEEE Transactions on Information
  Theory}, vol.~35, no.~6, pp. 1146--1159, November 1989.

\end{thebibliography}

\end{document}